\theoremstyle{definition}
\newtheorem{definition}{\textbf{Definition}}
\theoremstyle{example}
\newtheorem{example}{Example}
\theoremstyle{theorem}
\newtheorem{theorem}{Theorem}[section]
\theoremstyle{corollary}
\newcommand\scalemath[2]{\scalebox{#1}{\mbox{\ensuremath{\displaystyle #2}}}}
\theoremstyle{lemma}
\newtheorem{lemma}{Lemma}[section]
\theoremstyle{remark}
\theoremstyle{lemma}
\begin{document}

\title{
	Multicast Networks Solvable over Every Finite Field}

\author{\IEEEauthorblockN{Niranjana Ambadi}
	\IEEEauthorblockA{Department of Electrical Communication Engineering\\
		Indian Institute of Science,
		Bangalore, India. 560012.\\
		Email: ambadi@iisc.ac.in}

}

\maketitle

\begin{abstract}
 In this work, it is revealed that an acyclic multicast network that is scalar linearly solvable over Galois Field of two elements, $GF(2)$,  is solvable over all higher finite  fields.
    An algorithm which, given a $GF(2)$ solution for an acyclic multicast network, computes the solution over any arbitrary finite field is presented. The concept of multicast matroid is introduced in this paper. Gammoids and their base-orderability  along with the regularity of a binary multicast matroid are used to prove the results.  
\end{abstract}

\section{Introduction}

\par A multicast network (henceforth denoted by $\mathcal{N}$), is a finite directed acyclic multigraph $G=(\mathcal{E;V})$ with a unique source node $s$ and a set of receivers $\mathcal{T}$. The source generates a set of $\omega$ symbols from a fixed symbol alphabet and will transmit them to all receivers through the network. The sets of input links and output  links  of  a  node $t$ are  denoted   by
$In(t)$ and $Out(t)$ respectively.  A  pair  of  links $(d,e)$ is  called  an adjacent pair if $d\in In(t)$ and $e\in Out(t)$ for  some $t\in \mathcal{V}$. The capacity of each link is assumed to be unity, i.e., one
symbol  can be transmitted on each link. For  a node $t$, the value of a maximum flow from the source node $s$ to node $t$ is denoted by $maxflow(t)$. Any receiver node $t\in \mathcal{T}$,  has at least $\omega$ edge-disjoint paths starting at the source $s$, i.e. $maxflow(t)\geq \omega ;\; \forall t \; \in \mathcal{T}$. $\mathcal{N}$ is solvable if all receivers can recover all the $\omega$ source symbols from their respective received symbols. When $|\mathcal{T}|>1$, routing is  insufficient to guarantee network solvability.

The idea of \textit{network coding} was first introduced by Ashwede et al. in \cite{ash}. The work stated that some network coding solution always exists over a sufficiently large alphabet.  It was further shown in \cite{LYC} that when the symbol alphabet is algebraically modelled
as a sufficiently large finite field, \textit{linear} network coding is  sufficient to yield
a solution. In a linear network code \cite{ITRY}, all the information symbols are regarded as elements of a finite field $\mathbb{F}$ called the \textit{base field}. These include the symbols that are generated by the  source as well as the symbols transmitted on the links. The symbol transmitted on an output link (an output
symbol) of a node is  a linear combination of all the symbols transmitted on the input links (input symbols)
of that node.
Encoding and decoding are based on linear algebra defined on the base field.

Koetter and Medard \cite{KoM} laid out an algebraic approach to network coding and proved that  a linear network code is guaranteed whenever the field size $q$ is larger than $\omega$ times the number of receivers, i.e., $q\geq \omega |\mathcal{T}|$. The field size requirement for the existence of a linear solution was further relaxed to $q>|\mathcal{T}|$ in \cite{HKMurota}. 
For single-source network coding, the Jaggi-Sanders algorithm \cite{jaggi} provides a construction of linear network codes that achieves the max-flow min-cut bound for network information flow.

Matroid theory is a branch of mathematics founded by Whitney \cite{Whitney}. It deals with the abstraction of various independence relations such as linear independence in vector spaces or the acyclic property in graph theory. In a network, the messages coming into a non-source node $t$, and the messages in the outgoing links of  a node $t$ are dependent. In  \cite{DouFreZeg}, this network form of dependence was connected with the matroid definition of dependence  and a general method of constructing networks from matroids was developed. Using this technique several well-known matroid examples were converted into networks that carry over similar properties.  

Kim and Medard \cite{Kim}  showed that a network is scalar-linearly solvable if and only if the network is a matroidal network associated with a representable matroid over a finite field. They proved that determining scalar-linear solvability of a network is equivalent to finding a representable matroid over a finite field and a valid network-matroid mapping.

Sun et al. \cite{sun} revealed that for a multicast network, a linear solution over a given finite field does not necessarily imply the existence of a linear solution over all larger finite fields. Their work showed that not only the field size, but the order of  subgroups  in  the  multiplicative  group  of  a  finite  field  affects the linear solvability. 

They proved that on an acyclic multicast network, if there is a linear solution
over $GF(q)$, it is not necessary that there is a linear solution over every $GF(q^*)$ with  $q^* \geq q$. 
In a multicast network, the coding vectors of a Linear Network Code (LNC) naturally induces a representable
matroid on the edge set, and in the strongest sense, this induced representable matroid is referred to as a network
matroid \cite{OLMLNC}, in which case the linear independence of coding vectors coincides with the independence structure of edge-disjoint paths \cite{SuLiChan}. Analogous to a network matroid, a gammoid in matroid theory characterizes the independence of node-disjoint paths in a directed graph. Concluding the discussions of the paper, Sun et al. \cite{sun}  conjectured that if a multicast network is linearly solvable over $GF(2)$, it is linearly solvable over all finite fields.
%

In this paper, it is proved that an acyclic multicast network that is linearly solvable over $GF(2)$ is linearly solvable over all finite fields. First,  a matroid called the \textit{multicast matroid} is introduced.  Further, it is proved that the multicast matroid is a gammoid.  Further, it is shown how any $\mathbb{F}$-linear solution for  $\mathcal{N}$ gives an $\mathbb{F}$-representation for the multicast matroid. It is then shown that  given a $GF(2)$ solvable multicast network, its multicast matroid  is  binary and base-orderable. The existence of solution over all fields follows from the regularity of this matroid.

The remainder of the paper is organized as follows:
\begin{itemize}
	\item{ Section \ref{conventions} introduces some notations and conventions that are used in this paper.}
	\item  In Section \ref{background} the following concepts are detailed: linear network codes,  linear multicast, matroid and its dual, graphic matroids, totally-unimodular matrix,  transversal matroids, gammoids, induced matroid of an LNC and series-parallel extension of a matroid.
	
	\item The novel concept of multicast matroids is defined and discussed in Section \ref{multmatsection}.
	\item Section \ref{baseordmult} discusses the base-orderability of  multicast matroids.

	\item In Section \ref{fundaresults2}, it is  proved that if a multicast network is solvable over $GF(2)$ it is solvable over all higher fields. 
	\item Given a multicast network that is solvable over $GF(2)$, it is possible to obtain the solution over an arbitrary field $\mathbb{F}$, starting with a solution over $GF(2)$. This algorithm is detailed in Section \ref{algorithm}.
	\item Section \ref{conclusion} concludes the discussions. 
\end{itemize}

\section{Conventions}
\label{conventions}
In a multicast network $\mathcal{N}$, a message generated at the source node $s$ consists of $\omega$ symbols in the base field $\mathbb{F}$. Let these be $x_1, x_2, \ldots, x_{\omega}$. This message is represented by a $1\times \omega$ row vector $\textbf{x} \in \mathbb{F}^{\omega}$. At a node $t$ in the network, the ensemble of received symbols from $In(t)$ is mapped to a symbol in $\mathbb{F}$ specific to each output link in $Out(t)$ and the symbol is sent on that link. The set of receivers is denoted by $\mathcal{T}= \{T_1, T_2, \ldots, T_{|\mathcal{T}|}\}$.

For a natural number $n$,  $[n]$ shall denote the set $\{1, 2 , \ldots, n\}$.

A sequence of links $e_1 ,e_2 ,\ldots,e_l$ is called a path leading to a node $t$ if $e_1 \in   In(s)$, $e_l \in In(t )$, and $(e_j ,e_{j+1})$ is an adjacent pair for all $1\leq j \leq l-1$. Two directed paths $P_1$ and $P_2$ in $G$ are \textit{edge-disjoint} if the two paths do not share a common link. For  $i\in[ |\mathcal{T}|]$, there exist $\omega$ edge-disjoint paths $P_{i1} ,P_{i2} ,\ldots,P_{i\omega}$ leading to the receiver $T_i$. If $\mid Out(s) \mid > \omega$, the $\omega$ imaginary edges of the source are assumed to come from a super source node, $s'$. The source node $s$ and its $\omega$ imaginary incoming edges are then included in $\mathcal{V}$  and $\mathcal{E}$ respectively.


\section{Background}
\label{background}
This section defines some terminologies in network coding and matroid theory from existing literature, which are used during exposition of results in the  sections to follow. A well acquainted  reader may skip this section and refer back for clarity if unfamiliar terms are encountered.

\subsection{Linear Network Code}
\begin{definition}[Global Description of a Linear Network Code (See \cite{ITRY}, Ch. $19$)]
	An $\omega$ dimensional linear network code on an acyclic network over a base field $\mathbb{F}$ consists of a scalar $k_{d,e}\in \mathbb{F}$ for every adjacent pair of links $(d,e)$ in the network as well as a column $\omega$-vector ${f_e}$ for every link $e$ such that:
	
	\begin{enumerate}
		
		\item{ $f_e=\sum_{d\in In(t)} k_{d,e} f_d $ for $e \in Out\left(t\right)$.}
		
		\item{The vectors \textbf{$f_e$} for the $\omega$ imaginary links $e\in In(s)$ form the standard basis of the vector space $\mathbb{F}^\omega$.	
}		
	\end{enumerate}
The vector \textbf{$f_e$} is called  \textit{global encoding kernel} for a link $e$.
\end{definition}

Initially, the source $s$ generates a message \textbf{x} as a row $\omega$-vector. The symbols in \textbf{x} are regarded as being received by source node $s$ on the $\omega$ imaginary links as \textbf{x.$f_d$}, $d \in In\left(s\right)$. Starting at the source node $s$, any node $t$ in the network receives the symbols \textbf{x.$f_d$}, $d\in In\left(t\right)$, from which it calculates the symbol \textbf{x.$f_e$} for sending on each link  $e\in Out \left(t\right)$ as:
$$ \textbf{x}.f_e=\textbf{x} \sum_{d \in In\left(t\right)} k_{d,e} f_d= \sum_{d \in In\left(t\right)} k_{d,e}\left(\textbf{x}.f_d\right).$$

\noindent In this way, the symbol \textbf{x}.$f_e$ is transmitted on any link $e$ in the network.

Given the local encoding kernels $(k_{d,e})$ for all the links in an acyclic network, the global encoding kernels can be calculated recursively in any upstream-to-downstream order. A linear network code can thus be specified by either the global encoding  kernels or  the local  encoding  kernels.


\subsection{Linear Multicast}
In a  multicast network, for a non-source node $t$ with $maxflow(t)\geq \omega$ ($i. e.$ every receiver node), there exist $\omega $ edge-disjoint paths from the $\omega$ imaginary incoming links of $s$ to $\omega$ distinct links in $In(t)$. A multicast network is solvable over $\mathbb{F}$ if and only if each receiver is able to decode $\omega$ messages produced by the source.

For a non-source node $t$, let the vector space generated by the global encoding kernels of the edges in $In(t)$ be denoted by $V_t$.
\begin{definition}[Linear Multicast\cite{ITRY}, Ch. $19$]
	\label{multicast}
	An  $\omega$-dimensional linear network code on an acyclic network qualifies as a linear multicast if $dim(V_t) = \omega$ for every non-source node $t$ with maxflow $\left(t\right)\geq \omega$.
	
\end{definition}

%
%



\subsection{Matroid}
\begin{definition}
	\label{matroid}
	A matroid $M$ (See \cite{Oxley}, Ch. $1$) is an ordered pair $(E, \mathcal{I})$ consisting of a finite set $E$ and a collection $\mathcal{I}$ of subsets of $E$ having the following three properties:
	\begin{enumerate}
		\item $\phi \in \mathcal{I}$.
		\item If $I \in \mathcal{I}$ and $I' \subseteq I$, then $I' \in \mathcal{I}$.
		\item If $I_1$ and $I_2$ are in $\mathcal{I}$ and $|I_1|< |I_2|$, then there is an element $e$ of $I_2- I_1$ such that $I_1 \cup e \in \mathcal{I}$.
	\end{enumerate}
\end{definition}

\subsection{Dual of a Matroid}
\begin{definition}
	\label{dual}
	Let $M$ be a matroid and $\mathcal{B}^*(M)$ be $\{E(M)-B: B \in \mathcal{B}(M)\}$, then $\mathcal{B}^*(M)$ is the set of bases of a matroid on $E(M)$, called the \textit{dual} of $M$. 
\end{definition}


\subsection{Graphic matroids}
\label{graphicmatroid}
\begin{definition}
	(Graphic matroid \cite{Oxley}, Ch. $5$) Let $E$ be the set of edges of a graph $G$ and $\mathcal{C}$ be the set of edge sets of cycles of $G$. Then $\mathcal{C}$ is the set of circuits of a matroid on $E$, called the \textit{cycle matroid} or \textit{polygon matroid} of $G$.
	
	Any matroid that is isomorphic to the cycle matroid of a graph is called a \textit{graphic matroid}.
\end{definition}

Graphic matroids form a fundamental class of matroids with numerous results and operations related to graphs having their counterparts for matroids. It is well known that every graphic matroid $M(G)$ is regular, i.e., $M(G)$ is representable over all finite fields.  Also, the dual of $M(G)$, $M^*(G)$ is also regular \cite{Oxley}.

Oxley \cite{Oxley} states that given a graph $G$, a representation for its graphic matroid $M(G)$ can be obtained by using the following construction. Form a directed graph $D(G)$ from $G$ by arbitrarily assigning a direction to each edge. Let $A_{D(G)}=[a_{ij}]$ denote the incidence matrix of $D(G)$, where

\begin{equation}
a_{ij} =
\begin{cases}
1, & \text{if vertex $i$ is the tail of non-loop arc $j$; },
\\
-1, & \text{if vertex $i$ is the head of non-loop arc $j$;},
\\
0, & \text{otherwise} .
\end{cases}
\end{equation}

Then $A_{D(G)}$  represents $M(G)$ over an arbitrary field $\mathbb{F}$ (refer \cite{Oxley} Lemma $5.1.3$ for proof).

\subsection{ Totally-unimodular Matrix}
\label{totumdisc}
\begin{definition}
	[Totally-unimodular matrix \cite{welsh}, Ch. $2$] A matrix of real numbers is totally unimodular if the determinant of every square submatrix is $1, -1$ or $0$.
\end{definition}

A matroid is regular if its representation matrix is totally unimodular (\cite{Oxley}, Ch. $6$). For a graphic matroid, as discussed before, $M(G)=M[A_{D(G)}]$. But $A_{D(G)}$ is a $(0, \pm1)$-matrix whose every column has at most one $1$ and one $-1$. In other words, $A_{D(G)}$ is totally unimodular (see \cite{Oxley} Ch. $5$, Lemma 5.1.4). This establishes the regularity of a graphic matroid and also its representation over an arbitrary field $\mathbb{F}$. These concepts shall be used in  the algorithm presented in Section \ref{algorithm} of this paper.

\subsection{Transversal Matroids}
\label{transversalmat}
Let $\mathcal{A}$ be a family $(A_1,A_2, \ldots, A_m)$ of subsets of a set $S$. A $\emph{transversal}$ or a system of distinct representatives of $(A_1,A_2, \ldots, A_m)$ is a subset $\{e_1,e_2,\ldots, e_m\}$ of $S$ such that $e_j \in A_j$ for all $j \in [m]$ and $e_1,e_2,\ldots, e_m$ are distinct. In other words, $T$ is a transversal of $(A_j:j\in J)$ if there is a bijection $\psi :J \rightarrow T$ such that $\psi(j) \in A_j$ for all $j$ in $J$.

If $X\subseteq S$, then $X$ is a \emph{partial transversal} of $A_j:j \in J$, if for some subset $K$ of $J$, $X$ is a transversal of $(A_j:j\in K )$. 

\begin{definition}[Transversal Matroid \cite{Oxley}, Ch. $1$] Let $\mathcal{I}$ be the set of partial transversals of $\mathcal{A}$. Then $\mathcal{I}$ is the collection of independent sets of a matroid on $S$ called the transversal matroid.
\end{definition}

Partial transversals can also be defined in terms of a matching in a bipartite graph.  The bipartite graph associated with $\mathcal{A}$ has vertex set $S \cup J$ and edge set $\{sj: s \in S, j\in J, s \in A_j\}$. A matching in a graph is a set of edges in the graph no two of which have a common endpoint.
A subset $X$ of $S$ is a partial transversal of $\mathcal{A}$ iff there is a matching in the bipartite graph in which every edge has one end-point in $X$ \cite{Oxley}.

\subsection{Gammoids}

\label{gammoids} 
The class of matroids knows as \textit{gammoids} was first discovered by Perfect \cite{perfect}. Though considered to be difficult to handle, gammoids are closely related to transversal matroids.

	Let $G=(V,E)$ be a directed graph with vertex set $V$ and edge set $E$. When $A, B \subset V$, there exists a \textit{linking} of $A$ onto $B$ if for some bijection $\alpha:A \rightarrow B$ we can find node-disjoint paths $(P_v: v\in A)$ in $G$ such that $P_v$ has initial vertex $v$ and terminal vertex $\alpha(v)$. 
\begin{definition}[Gammoids \cite{welsh}, Ch. $13$] 
	\label{gammoids_def}
Choose a fixed subset $B$ of $V$ and let $L(G,B)$ denote the collection of subsets of $V$ which can be linked to $B$. That is $X\in L(G,B)$ if there exists a $Y\subseteq B$ such that there is a linking of $X$ onto $Y$. $L(G,B)$ is the collection of independent sets of a matroid on $V$. 

Any matroid which can be obtained from a directed graph $G$ and some subset $B$ of $V(G)$ in this way, is called a \textit{strict gammoid}.
 A \textit{gammoid} is a matroid which is obtained by restricting a strict gammoid $L(G,B)$ to some subset of $V$.
\end{definition}
 A matroid is a strict gammoid if and only if its dual matroid is transversal (see \cite{welsh}, Ch. $13$, Theorem $2$). This duality shall be exploited in the proof of Lemma \ref{mainlemma}.

 \subsection{Induced Matroid} 
 \begin{definition}[Induced Matroid of a Linear Network Code\cite{OLMLNC}]
 	
 	Given a network $\mathcal{N}$ and a linear network code $L$ over $\mathbb{F}$. Let $M$ be the matrix obtained by juxtaposing all the $\mid\mathcal{E}\mid$ global encoding kernels of the network code. Let $\mathcal{I}$ be the family of sets of edges whose global encoding kernels are linearly independent.
 	Then $(\mathcal{E},\mathcal{I})$ is the matroid induced by the network code $L$ of $\mathcal{N}$ and is denoted by $M_{IN}(\mathcal{E},\mathcal{I}).$
 \end{definition}
 Among the class of induced matroids, the network matroid has the maximum family of independent sets\cite{OLMLNC}. The matroid induced by a generic linear network code \cite{ITRY} is a representation of the network matroid. In a linear multicast, not all bases of the network matroid need be the bases of the induced matroid.

 \subsection{Series-parallel Extension of a Matroid}
 \label{seriesparallel}
  Let $M$ be a matroid on $S$, let $x \in S$ and suppose $y \notin S$.
\begin{definition}[Series extension of a matroid (\cite{welsh}, Ch. $2$ )] The series extension of $M$ at $x$ by $y$ is the matroid $sM(x,y)$ on $S\cup y$ which has as its bases the sets of the form:\\
	$1)$ $B\cup y$; $B$ is a base of $M$, or\\
	$2)$ $B\cup x$; $B$ a base of $M$, $x \notin B$.
\end{definition}

\begin{definition}[Parallel extension of a matroid (\cite{welsh}, Ch.$2$ )] The parallel extension of $M$ is the matroid $pM(x,y)$ on $S\cup y$ which has as its bases the sets of the form:\\
	$1)$ $B$ is a base of $M$, or\\
	$2)$ $(B\backslash x)\cup y$; $x\in B$, $B$ a base of $M$.
	\label{parallel}
	
\end{definition}

\begin{definition}
	[Series-parallel extension of a matroid \cite{welsh}, Ch.$2$] A series-parallel extension of a matroid $M$ is a matroid which can be obtained from $M$ by successive series and parallel extensions. 
\end{definition}

\section{Multicast Matroid}
\label{multmatsection}
This section introduces and details the concept of multicast matroid in the context of a linearly solvable multicast network. 
\subsection{Defining a Multicast Matroid}

In a multicast network $\mathcal{N}$ of dimension $\omega$, each receiver $T_i\in \mathcal{T}$ is connected to the $\omega$ edges of the source through $\omega$ edge-disjoint paths 
For a network $\mathcal{N}$, consider any one receiver, say, $T_1$ and its $\omega$ edge-disjoint paths. Let the set of edges that form the $\omega$ edge-disjoint paths of $T_1$ be $\mathcal{E}_1$.

%
A bipartite graph $H(S,T,E)$ is formed  as follows:
\begin{IEEEeqnarray}{rCl}
	S & = &\big\{e_i ;\; e_i\in \mathcal{E}_1 \big\},
	\label{eq:block3_eq1}\\
	T & = &\big\{\hat{e}_i ;\; e_i \in \mathcal{E}_1 \backslash \{e_1,e_2, \ldots, e_\omega\}\big\},
	\label{eq:block3_eq2}\\
	E & = & \big\{(e_i,\hat{e}_i);e_i \in \mathcal{E}_1\backslash \{e_1,e_2, \ldots, e_\omega\}\big\}\bigcup\nonumber \\%
	  &  & \big\{(e_i, \hat{e}_j); e_i\in \mathcal{E}_1, \text{ $ e_j$ is the edge succeeding $e_i$}\nonumber\\
	  &   & \text{in one of the $\omega$ edge-disjoint  paths}\big\}.
	\label{eq:block3_eq3}
\end{IEEEeqnarray}

Here, the set $S$ is a copy of the edge set $\mathcal{E}_1$ and the set $T$ is a disjoint copy of $\mathcal{E}_1$ excluding the $\omega$  edges of $Out(s)$.
 
The bipartite graph obtained from the edge-disjoint paths of receiver $T_1$ for the network in Fig. \ref{fig1} is shown in Fig. \ref{fig3}.

\begin{figure}[htbp]
	\centering
	\includegraphics[scale=0.45]{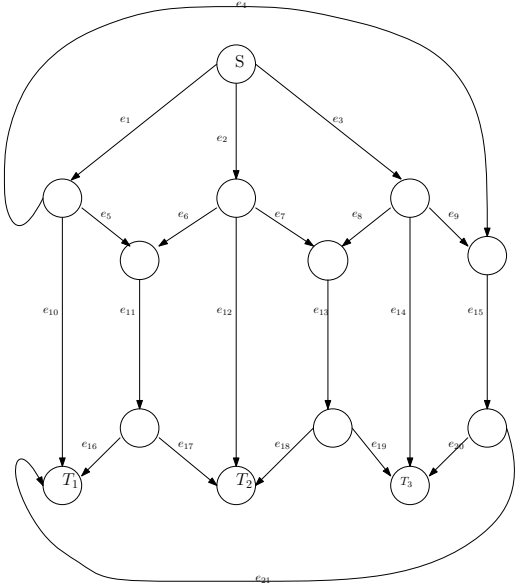}
	
	\caption{A single-source multicast network with three receivers}
	\label{fig1}
\end{figure}
\begin{figure}[htbp]
	\centering
	\includegraphics[scale=0.42]{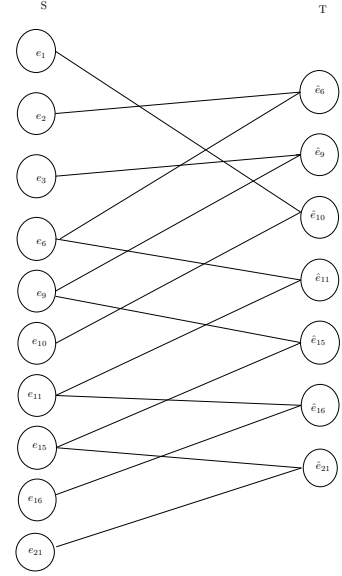}
	
	\caption{Bipartite graph $H$ constructed from edge-disjoint paths of $T_1$.}
	\label{fig3}
\end{figure}

In  the lemmas [\ref{if}, \ref{onlyif}], $\mathbb{T}$ denotes an arbitrary receiver $T_i\in \mathcal{T}$, $P_1, P_2, \ldots, P_{\omega}$ denote its $\omega$ edge-disjoint paths and $\mathbb{H}(S, T, E)$ is the corresponding bipartite graph.

\begin{lemma} 
	\label{if}
	If there is a perfect matching in $\mathbb{H}$ from the nodes in $S-B$ to $T$, there exist $\omega$ edge-disjoint paths from $s$ to the edges in $B$ of the network.
\end{lemma}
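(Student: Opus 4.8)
The plan is to exploit the fact that edge-disjointness of $P_1,\ldots,P_\omega$ makes $\mathbb{H}$ split into $\omega$ elementary pieces, and to read off the conclusion from the combinatorics of a matching on each piece. First I would observe that, since the paths are edge-disjoint, every edge lies on exactly one $P_k$, so within that path it has a \emph{unique} successor. Hence each forward arc $(e_i,\hat e_j)$ of $\mathbb{H}$ (with $e_j$ succeeding $e_i$) and each self-arc $(e_i,\hat e_i)$ stays inside a single path. Consequently $\mathbb{H}$ is the disjoint union $\mathbb{H}_1\sqcup\cdots\sqcup\mathbb{H}_\omega$, where $\mathbb{H}_k$ is built from the edges of $P_k$ alone.

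Next I would identify the shape of each $\mathbb{H}_k$. Writing $P_k=(g_1,g_2,\ldots,g_{L})$ with $g_1\in In(s)$ and $g_L$ terminal, the $S$-side of $\mathbb{H}_k$ is $\{g_1,\ldots,g_L\}$ and the $T$-side is $\{\hat g_2,\ldots,\hat g_L\}$; the self-arcs give $g_m-\hat g_m$ and the forward arcs give $g_m-\hat g_{m+1}$. I would check that this is precisely a path graph $g_1-\hat g_2-g_2-\hat g_3-\cdots-\hat g_L-g_L$ on $2L-1$ vertices, with the $L$ $S$-vertices at the odd positions and the $L-1$ $T$-vertices at the even positions (the endpoints $g_1$ and $g_L$ having degree one, since a source edge has no self-arc and a terminal edge has no forward arc).

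The heart of the argument is then the elementary fact that in a path graph of odd order $2L-1$, a matching saturating all $L-1$ even-position vertices is necessarily maximum and therefore leaves exactly one odd-position vertex — one $S$-vertex — uncovered. Applying this to each $\mathbb{H}_k$: a perfect matching of $\mathbb{H}$ onto $T$ saturates every $\hat g$, hence leaves exactly one edge $b_k$ of $P_k$ unmatched on the $S$-side. Since the matching saturates $S-B$, the unmatched $S$-vertices are exactly the elements of $B$; thus $|B|=\omega$ and $B$ meets each $P_k$ in the single edge $b_k$.

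Finally I would produce the required paths directly: for each $k$ take the prefix of $P_k$ from $g_1$ up to and including $b_k$. Each prefix is a path from $s$ to the edge $b_k\in B$, and as sub-paths of the pairwise edge-disjoint paths $P_1,\ldots,P_\omega$ they remain pairwise edge-disjoint; they therefore constitute $\omega$ edge-disjoint paths from $s$ to the edges of $B$. I expect the only real obstacle to be the structural step: one must justify carefully, using edge-disjointness, that $\mathbb{H}$ decomposes into the $\omega$ path graphs $\mathbb{H}_k$, since the whole counting argument (exactly one uncovered $S$-vertex, hence exactly one edge of $B$, per path) rests on this. Once the path-graph picture is in place, the matching count and the prefix construction are routine.
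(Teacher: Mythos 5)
Your proof is correct, and while it rests on the same structural fact as the paper's --- edge-disjointness forces every node of $S$ to have at most two neighbours in $T$, both lying within its own path --- the mechanism is genuinely different. The paper runs a forcing induction downstream from the source: each $e_j\in\{e_1,\dots,e_\omega\}\setminus B$ has a \emph{unique} neighbour in $T$ (its successor's copy, since source edges have no self-arc), so it must be matched forward; inductively, every later edge not in $B$ finds its own copy already used by its predecessor and is likewise forced forward, and this cascade halts precisely at the edges of $B$, so the matched prefixes are themselves the required paths. You never analyse how the matching is arranged: you split $\mathbb{H}$ into $\omega$ disjoint path-graph components $\mathbb{H}_k$ and count --- a matching saturating the $L_k-1$ vertices on the $T$-side of a component with $L_k$ vertices on the $S$-side leaves exactly one $S$-vertex exposed, and the exposed vertices are exactly $B$ --- concluding $|B|=\omega$ with exactly one edge of $B$ on each $P_k$, after which the prefix construction finishes the job. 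Your counting route is the more robust of the two: it needs no claim about the matching's internal structure (the paper's ``must necessarily be matched'' step is exactly what you avoid) and it makes $|B|=\omega$ explicit, a point the paper leaves implicit; the paper's forcing route is more constructive, reading the paths directly off the matching itself. The structural step you flagged as the main obstacle is immediate: since the paths are edge-disjoint, each edge lies on exactly one $P_k$ and has at most one successor there, so both arcs of $\mathbb{H}$ incident to it stay inside $\mathbb{H}_k$.
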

\begin{proof}
	Consider a set of nodes $S-B$ such that there exists a perfect matching in $\mathbb{H}$. Let $B=\{e_{a_1}, e_{a_2}, \ldots, e_{a_{|B|}}\}$. In $\mathbb{H}$, the nodes $\{e_1,e_2, \ldots, e_{\omega}\}\backslash B $ must necessarily be matched to the succeeding edge in the edge-disjoint paths $P_{1}, P_{2}, \ldots, P_{\omega}$ respectively. This is because, in $\mathbb{H}$ there is only one edge from each node in $\{e_1, e_2, \ldots, e_{\omega}\}$ to $T$. Let these nodes in $T$ be $\{\hat{e}_{i_1}, \hat{e}_{i_2}, \ldots, \hat{e}_{i_\omega}\}$.
	
	 Now, the nodes $\{{e}_{i_1}, {e}_{i_2}, \ldots, {e}_{i_\omega}\}\backslash B$ must be matched to the succeeding edge in paths $P_1, P_2, \ldots , P_{\omega}$  respectively. This is because, any node $e_{i_j}$ can be matched to either $\hat{e}_{i_j}$ or its succeeding edge in $P_{j}$ path. This would progressively continue until nodes $\hat{B}=\{\hat{e}_a, \hat{e}_b, \ldots, \hat{e}_{|B|}\}$ have been matched.  Thus, a perfect matching in $H$ from $S-B$ to $T$  always implies the existence of $\omega$ edge-disjoint paths from the source to edges in $B$.	
\end{proof}

\begin{lemma}
	\label{onlyif}
	If there are $\omega$ edge-disjoint paths in $\mathcal{N}$ from $s$ to the edges in $B$, there always exists a perfect matching in $\mathbb{H}$ from the nodes in $S-B$ to $T$.
\end{lemma}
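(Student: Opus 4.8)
The plan is to prove the converse of Lemma~\ref{if}: I take the $\omega$ edge-disjoint $s$--$B$ paths as given and manufacture a matching in $\mathbb{H}$ saturating every vertex of $S-B$. First I would dispose of the bookkeeping. Since the $\omega$ paths terminate at the $\omega$ distinct edges of $B$ and each begins at a distinct imaginary source edge, we have $|B|=\omega$, hence $|S-B| = |\mathcal{E}_1|-\omega = |T|$; so a matching saturating $S-B$ is automatically a perfect matching between $S-B$ and $T$, and it is enough to verify Hall's condition on the $S-B$ side. Taken together with Lemma~\ref{if}, this lemma is exactly a Menger--Hall duality, so the natural engine is the equivalence between edge-disjoint paths (via Menger's theorem) and saturating matchings (via Hall's/K\"onig's theorem).

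The structural tool I would exploit is that every successor edge $(e_i,\hat e_j)$ and every self-loop $(e_i,\hat e_i)$ of $\mathbb{H}$ stays inside a single one of the paths $P_1,\dots,P_\omega$. Consequently $\mathbb{H}$ splits into $\omega$ disjoint \emph{staircase} components $\mathbb{H}_1,\dots,\mathbb{H}_\omega$, where $\mathbb{H}_\ell$ is a simple alternating path on the edges of $P_\ell$ (the source edge has only a successor neighbour, the terminal edge only its self-copy, and every interior edge has both). Deleting the $B$-edges removes left vertices, and a short check on such an alternating path shows that its remaining left vertices can be saturated precisely when at least one of its edges has been deleted, i.e. when $P_\ell$ meets $B$. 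This reduces the entire lemma to one assertion: the existence of $\omega$ edge-disjoint $s$--$B$ paths forces every $P_\ell$ to contain an edge of $B$ (and then, counting against $|B|=\omega$, exactly one). Granting that, the matching is assembled component by component -- in $\mathbb{H}_\ell$ match every edge lying before the $B$-edge to its successor and every edge lying after it to its own copy.

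The hard part will be this last reduction, and the difficulty is concentrated at nodes shared by several of the paths $P_1,\dots,P_\omega$, where the single ``successor'' assigned to an edge depends on the chosen decomposition and need not agree with the routing used by the given $s$--$B$ paths. Here I would let Menger's theorem do the real work: I would argue by contradiction, use Hall's/K\"onig's theorem to extract a deficient set $X\subseteq S-B$ with $|N_{\mathbb{H}}(X)|<|X|$, and translate its bipartite neighbourhood into an $s$--$B$ edge-cut of size strictly less than $\omega$. Since the assumed $\omega$ edge-disjoint $s$--$B$ paths certify, by Menger, that no such cut exists, this yields the contradiction and establishes the lemma. I expect the delicate point of the write-up to be precisely the claim that a bipartite deficiency manufactures a forbidden cut respecting the successor assignment at shared nodes; once that cut correspondence is pinned down, the staircase description supplies the matching explicitly.
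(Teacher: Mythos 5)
Your preparatory analysis is sound and in fact sharper than anything in the paper: the count $|B|=\omega$, the decomposition of $\mathbb{H}$ into vertex-disjoint staircase components (one per $P_\ell$, valid because edge-disjointness of $P_1,\dots,P_\omega$ puts every edge of $\mathcal{E}_1$ on a unique path with a unique successor), and the criterion that a component's surviving left vertices are saturable iff that component loses a vertex to $B$, are all correct, and they correctly reduce the lemma to the single claim that $\omega$ edge-disjoint $s$--$B$ paths force $|B\cap P_\ell|=1$ for every $\ell$. The genuine gap is that your proposed engine for this claim --- extract a Hall violator in $\mathbb{H}$ and convert it into an $s$--$B$ edge-cut of $\mathcal{N}$ of size less than $\omega$, contradicting Menger --- cannot be made to work, because $\mathbb{H}$ records only the successor relation along the fixed paths $P_1,\dots,P_\omega$, whereas Menger in $\mathcal{N}$ sees every adjacent pair of the network; at a node shared by two of the $P_\ell$ these two notions come apart, and the reduction target is actually \emph{false} under the lemma's literal hypothesis. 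Concretely, take $\omega=2$, $P_1=(a_1,a_2,a_3)$, $P_2=(b_1,b_2)$ with $a_1: s\to u$, $a_2: u\to w$, $b_1: s\to w$, and $a_3,b_2: w\to T_1$ parallel edges. For $B=\{a_2,a_3\}$ the paths $(a_1,a_2)$ and $(b_1,a_3)$ are edge-disjoint $s$--$B$ paths (even lying inside $\mathcal{E}_1$), yet in $\mathbb{H}$ both $b_1$ and $b_2$ have $\hat{b}_2$ as their only neighbour, so $S-B$ has no perfect matching; and since two edge-disjoint $s$--$B$ paths exist, the minimum $s$--$B$ cut is $2$, so no Menger contradiction can ever be extracted. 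The ``delicate point'' you flagged is therefore not deliverable: the statement only becomes true if the $s$--$B$ paths are additionally required to respect the successor structure of $P_1,\dots,P_\omega$ (as the initial segments of the $P_\ell$, and the paths produced in Lemma~\ref{if}, do), in which case each such path must run along a single $P_\ell$ and the one-terminus-per-path claim is immediate.

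For comparison, the paper's own proof takes a different, purely constructive route with no Hall/K\"onig or Menger machinery: it matches each non-terminal edge $e_j$ of a given $s$--$B$ path $\mathbb{P}_i$ to the copy $\hat{e}_k$ of its successor in $\mathbb{P}_i$, and each edge lying on no $\mathbb{P}_i$ to its own copy. But that construction tacitly needs every successor pair along $\mathbb{P}_i$ to be an edge of $\mathbb{H}$; the paper's qualifier ``$(e_j,\hat{e}_k)\in E$'' silently discards pairs that are not (in the example above, $(b_1,\hat{a}_3)$), leaving left vertices unmatched. So the paper's one-line construction hides exactly the assumption that defeats your plan. Your component-wise analysis is the better foundation --- it isolates the obstruction precisely --- but as written your proposal, like the paper's proof, does not close the gap; closing it requires strengthening the hypothesis (paths following the $P_\ell$-successor structure), not a cleverer appeal to Menger.
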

\begin{proof}
	There are $\omega$ edge-disjoint paths from $s$ to $B$ in the network $\mathcal{N}$, given by $\mathbb{P}_1, \mathbb{P}_2, \ldots, \mathbb{P}_{\omega}$. A perfect matching $M$ from $S-B$ to $T$ can be constructed  as follows:
	\begin{IEEEeqnarray}{rCl}
		M & = & \big\{ \cup_{i=1}^{\omega} \{(e_j,\hat{e}_k) \in E: {e}_k \text{ is the edge succeeding $e_j$} \nonumber\\
		&    &\text{in path $\mathbb{P}_i$}\big\} \bigcup   \big\{(e_k,\hat{e}_k)\in E:e_k \notin \mathbb{P}_i, \forall i \in [\omega]\nonumber \big\} \\%
		\label{eq:block3_eq3}
	\end{IEEEeqnarray}

	Because any edge in $\mathcal{N}$ is in exactly one of the $\omega$ paths or none of them, $M$ is a perfect matching between $S-B$ and $T$ in $H(S,T,E)$.
\end{proof}

\begin{lemma}
	\label{dual}
	A matroid $M$ is a strict gammoid if and only if its dual matroid $M^*$ is transversal (see \cite{welsh}, Ch. $13$).	
\end{lemma}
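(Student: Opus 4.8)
The statement is the classical Ingleton--Piff duality identifying strict gammoids with duals of transversal matroids; since it is quoted verbatim from \cite[Ch.~13]{welsh}, the shortest route is simply to invoke that reference. To reconstruct a self-contained argument, I would exploit the fact that both classes are governed by the same combinatorial device: transversal matroids are presented by matchings in a bipartite graph, while strict gammoids are presented by systems of node-disjoint paths in a digraph. The plan is therefore to translate a presentation of one object into a presentation of the dual of the other, checking the translation at the level of spanning and independent sets by Hall-- and Menger--type matching arguments --- exactly the matching-to-disjoint-paths correspondence already established in Lemmas~\ref{if} and~\ref{onlyif}.

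For the forward direction, suppose $M = L(G,B)$ is a strict gammoid on vertex set $V$, and assume without loss of generality that $B$ is a basis, so that $\mathrm{rk}(M) = |B|$ and $\mathrm{rk}(M^*) = |V \setminus B|$. I would build a bipartite presentation of $M^*$ indexed by $V \setminus B$, associating to each $v \in V \setminus B$ a set $A_v \subseteq V$ recording the vertices through which $v$ can be routed in $G$, and then verify that the partial transversals of $(A_v : v \in V \setminus B)$ coincide with the independent sets of $M^*$. Since a set is independent in $M^*$ precisely when its complement is spanning in $M$, and a set is spanning in $M$ precisely when it contains a subset linkable onto $B$, this reduces to a matching $\leftrightarrow$ node-disjoint-paths equivalence of the Lemma~\ref{if}/\ref{onlyif} type. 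For the converse, given a maximal presentation $(A_1,\dots,A_k)$ of $M^*$ with $k = \mathrm{rk}(M^*)$, I would fix a transversal $\{y_1,\dots,y_k\}$ with $y_i \in A_i$ (a basis of $M^*$), set $B := E \setminus \{y_1,\dots,y_k\}$, and construct a digraph $D$ whose arcs encode, for each $i$, the replaceability of $y_i$ by the other members of $A_i$. The claim $M = L(D,B)$ then follows from the same augmenting-path correspondence, using that $X$ is independent in $M$ iff $E \setminus X$ contains a transversal of the presentation.

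The main obstacle in both directions is the combinatorial heart: proving that a system of distinct representatives covering (or avoiding) a prescribed set exists if and only if a corresponding system of node-disjoint paths exists. Getting the arc orientations and the roles of sources and sinks exactly right, so that disjoint linkings in the digraph correspond bijectively to the relevant matchings, is the delicate step --- and it is precisely what Lemmas~\ref{if} and~\ref{onlyif} accomplish in the multicast setting. Two smaller technical points also need care: justifying that every transversal matroid admits a maximal presentation with $\mathrm{rk}(M^*)$ sets, and verifying that one may indeed assume $B$ to be a basis of the strict gammoid.
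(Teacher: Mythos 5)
The paper gives no proof of this lemma at all: it is quoted as a known result (the Ingleton--Piff theorem) with a citation to Welsh, Ch.~13, which is exactly your primary recommendation, so your proposal matches the paper's approach. Your supplementary sketch is a reasonable outline of the classical argument (dualizing a linkage presentation into a bipartite presentation via the matching/disjoint-paths correspondence), but nothing beyond the citation is needed or supplied in the paper itself.
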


\begin{lemma}
	\label{mainlemma}
	In a multicast network $(\mathcal{V}, \mathcal{E},s)$, consider any one receiver say $T_i$. Let the $\omega$ edge-disjoint paths from the imaginary edges of the source to $T_i$ be $P_{i1}, P_{i2}, \ldots, P_{i\omega}$. Let the subset of edges constituting these $\omega$ paths be $\mathcal{E}_i \subseteq \mathcal{E}$. Let $\mathcal{B}$ be the family of subsets $X \subseteq \mathcal{E}_i$, which satisfies that there are $\omega$ edge-disjoint paths from the source to edges in $X$. Then, $(\mathcal{E}_i,\mathcal{B})$ forms a  matroid $M_{G_i}$ which is a strict gammoid, with $\mathcal{B}$ as the set of bases.
\end{lemma}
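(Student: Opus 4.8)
The plan is to realize $M_{G_i}$ as the \emph{dual} of the transversal matroid attached to the bipartite graph $\mathbb{H}(S,T,E)$, after which the strict gammoid property drops out of Lemma \ref{dual} almost for free. Since $S$ is a copy of $\mathcal{E}_i$, let $M_T$ be the transversal matroid on ground set $S$ whose independent sets are the partial transversals of $\mathbb{H}$, i.e. the subsets of $S$ that can be matched into $T$; this is a genuine matroid by the definition recalled in Section \ref{transversalmat}. I would then declare $M_{G_i} := M_T^{*}$. Its dual is $M_T$, which is transversal, so Lemma \ref{dual} immediately gives that $M_{G_i}$ is a strict gammoid. Thus the substance of the argument lies not in the gammoid property but in verifying that the bases of $M_T^{*}$ are exactly the family $\mathcal{B}$.

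To identify those bases I would first compute the rank of $M_T$. The $\omega$ edge-disjoint paths $P_{i1},\dots,P_{i\omega}$ terminate at $\omega$ distinct edges; collecting these terminal edges into a set $B_0\subseteq\mathcal{E}_i$ of size $\omega$, there are trivially $\omega$ edge-disjoint paths from $s$ to the edges of $B_0$ (the paths themselves). Lemma \ref{onlyif} then yields a perfect matching from $S-B_0$ to $T$, and since $|S-B_0|=|\mathcal{E}_i|-\omega=|T|$ this matching saturates all of $T$. Hence the maximum matching size of $\mathbb{H}$, which is the rank of $M_T$, equals $|T|$, and every basis of $M_T$ is a partial transversal of size $|T|$ that saturates $T$.

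With the rank pinned down, the dual bases fall out cleanly. By the definition of the dual matroid, $X$ is a basis of $M_{G_i}=M_T^{*}$ precisely when $S-X$ is a basis of $M_T$, that is, when there is a matching saturating $S-X$; and because maximum partial transversals have size $|T|$, such a matching is a perfect matching between $S-X$ and $T$ (perfect on both sides, since $|S-X|=|T|$ forces $|X|=\omega$). By Lemmas \ref{if} and \ref{onlyif} the existence of this perfect matching is equivalent to the existence of $\omega$ edge-disjoint paths from $s$ to the edges of $X$, which is exactly the condition defining membership in $\mathcal{B}$. Therefore $\mathcal{B}$ is the set of bases of the matroid $M_T^{*}$, so $(\mathcal{E}_i,\mathcal{B})$ is a matroid and, by the first paragraph, a strict gammoid.

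I expect the main obstacle to be the bookkeeping around the word \emph{perfect}: Lemmas \ref{if} and \ref{onlyif} are phrased for matchings that saturate \emph{both} $S-B$ and $T$, whereas a basis of a transversal matroid is a priori only a \emph{maximum} partial transversal. The rank computation in the second paragraph is precisely what reconciles the two notions, guaranteeing that every maximum partial transversal has size $|T|$ and hence saturates $T$. A related point requiring care is confirming that every member of $\mathcal{B}$ has cardinality exactly $\omega$, so that $\mathcal{B}$ is a genuine equicardinal antichain; this again follows from the count $|S-X|=|T|$, together with the fact that the source admits max-flow $\omega$ and the $\omega$ edge-disjoint paths terminate at $\omega$ distinct edges of $X$.
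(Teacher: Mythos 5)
Your proposal is correct and follows essentially the same route as the paper's proof: realizing $M_{G_i}$ as the dual of the transversal matroid of $\mathbb{H}$ via Lemmas \ref{if} and \ref{onlyif}, then invoking the gammoid--transversal duality of Lemma \ref{dual}. The only difference is that you explicitly carry out the rank computation and the cardinality bookkeeping (showing bases of the transversal matroid saturate $T$, so that the \emph{perfect}-matching hypotheses of Lemmas \ref{if} and \ref{onlyif} genuinely apply), details which the paper's two-sentence proof declares ``clear'' but does not write out.
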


\begin{proof}
	Using Lemmas $\ref{if}$ and $\ref{onlyif}$ it is clear that the matroid $M_{G_i}$ is the dual of the tranversal matroid induced from  the bipartite graph $H$, where $H(S,T,E)$ is constructed starting with the  edge-disjoint paths of receiver $T_i$. From Lemma \ref{dual}, $M_{G_i}$ is a strict gammoid.	
\end{proof}

\begin{theorem}
	\label{flinear}
	Given a multicast network $\mathcal{N}$ that is scalar linearly solvable over $\mathbb{F}$. Let $\mathcal{E}_i$ denote the set of edges in $\omega$ edge-disjoint paths of a receiver $T_i, i \in [|\mathcal{T}|]$. The vector matroid $M_{IN}$ induced by an $\mathbb{F}$-linear network code, when restricted to the edges in $\mathcal{E}_i$   gives an $\mathbb{F}$-representation of the gammoid $\mathcal{M}_{G_i}$.
\end{theorem}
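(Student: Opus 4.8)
The plan is to prove that the restricted vector matroid $M_{IN}|_{\mathcal{E}_i}$ and the gammoid $M_{G_i}$ coincide as matroids on the common ground set $\mathcal{E}_i$; since $M_{IN}|_{\mathcal{E}_i}$ is by construction representable over $\mathbb{F}$ by the global encoding kernels $\{f_e : e\in\mathcal{E}_i\}$, this equality is exactly the assertion that these kernels form an $\mathbb{F}$-representation of $M_{G_i}$. I would first record that both matroids have rank $\omega$: the $\omega$ imaginary source links lie in $\mathcal{E}_i$ and their kernels are the standard basis of $\mathbb{F}^\omega$, so this set is simultaneously a base of $M_{G_i}$ (Lemma~\ref{mainlemma}) and of $M_{IN}|_{\mathcal{E}_i}$. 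As the two matroids share the same rank, it suffices to show that a set $X\subseteq\mathcal{E}_i$ with $|X|=\omega$ is a base of $M_{G_i}$ if and only if $\{f_e : e\in X\}$ is linearly independent in $\mathbb{F}^\omega$.

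For the forward implication I would use the standard flow bound of linear network coding. Each kernel $f_e$ is produced by pushing the standard basis of $\mathbb{F}^\omega$ through the recursion $f_e=\sum_{d\in In(t)} k_{d,e}f_d$, so the dimension of $\mathrm{span}\{f_e : e\in X\}$ never exceeds the value of a maximum flow from $s$ to $X$. Hence if the $\omega$ kernels indexed by $X$ are independent, there is a flow of value $\omega$ from $s$ to $X$, and Menger's theorem supplies $\omega$ edge-disjoint paths from $s$ to the edges of $X$; by Lemma~\ref{onlyif} these yield a perfect matching from $S-X$ to $T$ in $\mathbb{H}$, so $X$ is a base of $M_{G_i}$. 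This direction needs nothing beyond linearity and establishes the inclusion $\mathcal{I}(M_{IN}|_{\mathcal{E}_i})\subseteq\mathcal{I}(M_{G_i})$, i.e. that $M_{IN}|_{\mathcal{E}_i}$ is a weak-map image of the gammoid.

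The converse — that every base $X$ of $M_{G_i}$ has linearly independent kernels — is where the solvability hypothesis is indispensable, and is the step I expect to be hardest. Here I would invoke that the code is a linear multicast, so $\dim(V_{T_i})=\omega$ (Definition~\ref{multicast}), and argue backwards from the receiver along the $\omega$ edge-disjoint paths. By Lemma~\ref{if} the base $X$ is crossed by $\omega$ edge-disjoint paths that continue on to $T_i$, so $X$ acts as a cut of the sub-DAG carrying those paths; the kernels at $In(T_i)$ are then linear images, through the recursion, of the kernels indexed by $X$, and a rank deficiency at $X$ would force $\dim(V_{T_i})<\omega$, contradicting decodability. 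I would formalise this as an induction over the advancing frontier furnished by the matching of Lemma~\ref{onlyif}, showing that the $\omega$-dimensional span is preserved as the frontier slides from $X$ toward $T_i$; this is precisely the coincidence, for a linear multicast, between linear independence of coding vectors and the edge-disjoint-path structure recorded in \cite{SuLiChan, OLMLNC}.

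The genuine obstacle is that the recursion defining $f_e$ sums over all of $In(t)$ for $t=\mathrm{tail}(e)$, including links outside $\mathcal{E}_i$, so $X$ is a cut only within the sub-DAG of the $\omega$ chosen paths and need not separate $s$ from $T_i$ in the full network $\mathcal{N}$. To force the backward argument through I would track only the contributions carried along $\mathcal{E}_i$ and lean on the edge-disjointness of the $\omega$ paths, so that the frontier transports exactly $\omega$ coordinates that are relabelled rather than mixed away as it advances, while using $\dim(V_{T_i})=\omega$ to exclude the rank collapse that external contributions could otherwise introduce. This is exactly the point at which a non-multicast linear code can fail, so the multicast hypothesis must enter here and essentially nowhere else.
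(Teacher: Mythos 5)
Your reduction to bases and your forward implication are sound, and in fact more careful than the paper's own proof, which consists of two sentences addressing only the converse: the flow bound $\dim\,\mathrm{span}\{f_e : e\in X\}\leq maxflow(s,X)$ plus Menger's theorem does show that $\omega$ independent kernels force $\omega$ edge-disjoint paths from $s$ to $X$, which is exactly the base condition of Lemma~\ref{mainlemma}. The genuine gap is the converse step. You have located the obstacle precisely, but your proposed repair cannot succeed, because the converse is \emph{false} for an arbitrary $\mathbb{F}$-linear multicast. Definition~\ref{multicast} constrains the rank of the kernels only at the in-sets $In(t)$ of nodes with $maxflow(t)\geq\omega$; it imposes nothing at a base $X$ of $M_{G_i}$ that fails to separate $s$ from $T_i$ in $\mathcal{N}$, and the external contributions you hope to neutralize are exactly what can restore full rank at $In(T_i)$ after a collapse at $X$, so decodability at $T_i$ yields no contradiction.

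Concretely: take $\omega=2$, imaginary edges $d_1,d_2$ into $s$, one intermediate node $u$, one receiver $T$, and edges $a,b$ from $s$ to $u$, $c$ from $s$ to $T$, $p,q$ from $u$ to $T$. Choose the paths $P_1=(d_1,a,p)$ and $P_2=(d_2,b,q)$, so $\mathcal{E}_T=\{d_1,d_2,a,b,p,q\}$, and the $GF(2)$ code $f_a=(1,0)$, $f_b=(0,1)$, $f_c=(0,1)$, $f_p=f_q=f_a+f_b=(1,1)$ (column vectors written as pairs). This is a valid linear multicast: $\dim(V_u)=2$ and $\dim(V_T)=\dim\,\mathrm{span}\{(0,1),(1,1)\}=2$, so $T$ decodes from $c$ and $p$. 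Yet $X=\{p,q\}$ is a base of $M_{G_T}$, witnessed by the edge-disjoint paths $(d_1,a,p)$ and $(d_2,b,q)$, while $f_p=f_q$ makes $X$ dependent in $M_{IN}$ restricted to $\mathcal{E}_T$; hence the restricted induced matroid is not a representation of the gammoid, no matter how the frontier induction is organized. Note that the paper's own proof commits exactly the leap you flagged: it silently identifies bases of $M_{G_i}$ with ``cuts separating $s$ and $T_i$'' and asserts that a linear multicast forces independence there; the example shows these two notions differ, and the assertion fails on bases that are not separating cuts. A correct statement needs a stronger hypothesis on the code --- e.g., a generic LNC, whose induced matroid is the network matroid and whose restriction to $\mathcal{E}_i$ does equal $M_{G_i}$ --- rather than a sharper argument along the chosen paths.
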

\begin{proof}
	By definition, in a multicast network of dimension $\omega$ there are $\omega$ edge-disjoint paths from the source to each of the receivers. Consider these $\omega$ paths from  the source $s$ to any of the receivers, say, $T_i$. Let $X\subset \mathcal{E}_i$ be the set of $\omega$ edges that  constitute a cut along these edge-disjoint paths separating $s$ and $T_i$.  Any  $\mathbb{F}$-linear multicast  exists only if, the $\omega$  global encoding kernels $f_{e_i}; \forall\; e_i\in X$ are linearly independent. 
	This  precisely satisfies the requirements for bases of the gammoid $M_{G_i}$.
\end{proof}

%
%
%
Next, a new matroid called the multicast matroid is defined. As discussed in Lemma \ref{mainlemma}, we have $|\mathcal{T}|$ strict gammoids $M_{G_1}, M_{G_2}, \ldots, M_{G_{|T|}}$ in the network $\mathcal{N}$.
Also recall the definition of parallel extension of a matroid (Definition \ref{parallel}).
\begin{definition}[Multicast Matroid]
	\label{multmatdef} 
	Given a linear multicast over a base field $\mathbb{F}$, on a multicast network $\mathcal{N}$.
	Let the $\omega$ edge-disjoint paths from $s$ to receiver $T_2$ be $P_{21},P_{22}, \ldots, P_{2\omega}$. Let $B_1\subset \mathcal{E}_2$ be such that $f_{B_1}=\{f_{e_{k_1}}, f_{e_2}, \ldots, f_{e_{\omega}}\}$, where $e_{k_1}$ is the edge succeeding $e_1$ in path $P_{21}$. We know that, $f_B=\{f_{e_1}, f_{e_2}, \ldots, f_{e_{\omega}}\}$ is a basis of the matroid $M_{G_1}$.
	$$f_{B_1}=(\{f_{e_1}, f_{e_2}, \ldots, f_{e_{\omega}}\} \backslash f_{e_1}) \cup f_{e_{k_1}}.$$ 
	Clearly, $f_{B_1}$ was obtained by parallel extension of the basis $f_B$ at $e_1$. By extending $f_B$ at each of $e_2, e_3, \ldots, e_{\omega}$ we get $f_{B_i}=(f_B\backslash f_{e_i} )\cup f_{e_{k_i}}, i\in 2,3 \ldots, \omega$; where ${e_{k_i}}$ is the edge succeeding $e_i$ in path $P_{2i}$.   
	In this way, every other basis in $M_{G_2}$ can be obtained progressively by extending the bases  $f_{B_1}, f_{B_2}, \ldots, f_{B_{\omega}}$.
	
	Likewise, the bases of each of the gammoids $M_{G_3}, M_{G_4}, \ldots M_{G_{|\mathcal{T}|}}$  can be obtained by parallel extension. 
	
	The matroid $\mathcal{M}$ which is so obtained as a  parallel extension of the strict gammoid $M_{G_1}$ (given  in Lemma \ref{mainlemma}) is called the \textit{multicast matroid}. The ground-set of $\mathcal{M}$ is the set of all global encoding kernels of $\mathcal{N}$, $f_{\mathcal{E}}=\{f_{e_1}, f_{e_2}, \ldots, f_{e_{|\mathcal{E}|}}\}$. The rank of $\mathcal{M}$ is $\omega$.
\end{definition}
\begin{theorem}
	\label{A1}
	Multicast matroid is a gammoid.
\end{theorem}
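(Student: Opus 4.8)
The plan is to realise the multicast matroid $\mathcal{M}$ as the end product of a finite sequence of parallel extensions applied to the strict gammoid $M_{G_1}$ furnished by Lemma \ref{mainlemma}, and to show that each individual parallel extension keeps us inside the class of gammoids. Since $M_{G_1}$ is a strict gammoid it is in particular a gammoid --- it is the restriction of its own $L(G,B)$ to all of the vertex set --- so by Definition \ref{multmatdef} everything reduces to a single closure statement: \emph{a parallel extension of a gammoid is again a gammoid}. Once this is proved, a short induction on the number of added elements finishes the theorem.

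First I would prove the closure statement by a direct digraph construction. Let $N$ be a gammoid, written as the restriction to some $V_0 \subseteq V(G)$ of the strict gammoid $L(G,B)$, and fix $x \in V_0$. To build $pN(x,y)$, I would form a new digraph $G'$ by adjoining one fresh vertex $y$ together with the single arc $y \to x$, keeping the sink set $B$ unchanged, and take the enlarged ground set $V_0 \cup \{y\}$. The crux is to check that this realises the parallel extension exactly. Because $y$ has no incoming arcs, no node-disjoint path can meet $y$ unless it begins at $y$, and any path beginning at $y$ must immediately pass through $x$. Consequently: (i) $\{y\}$ links onto $B$ if and only if $\{x\}$ does, so $y$ is a loop precisely when $x$ is; (ii) for any $Z \subseteq V_0$ with $x \notin Z$, deleting or prepending the arc $y \to x$ shows $Z \cup \{y\}$ links onto $B$ if and only if $Z \cup \{x\}$ does; and (iii) $\{x,y\}$ can never be linked, since the path from $y$ and the path from $x$ would both require the vertex $x$, so $x$ and $y$ are parallel. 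These are exactly the bases prescribed in Definition \ref{parallel}, and the matroid so obtained is the restriction of the strict gammoid $L(G',B)$ to $V_0 \cup \{y\}$, hence a gammoid.

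With the closure statement established I would conclude by induction. By Definition \ref{multmatdef}, $\mathcal{M}$ arises from $M_{G_1}$ by successively adjoining each remaining global encoding kernel $f_e$ with $e \in \mathcal{E}\setminus\mathcal{E}_1$ as a parallel extension of an element already present; this is precisely how the bases of $M_{G_2}, M_{G_3}, \ldots, M_{G_{|\mathcal{T}|}}$ are generated there. Each intermediate matroid is a gammoid by the closure statement, and a gammoid admits a digraph realisation to which the next parallel extension can be applied, so the induction propagates; after the final extension, $\mathcal{M}$ is a gammoid.

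I expect the main obstacle to lie in item (iii) and the accompanying bookkeeping: one must verify that the new vertex $y$ genuinely cannot be reused by the other linking paths, and that the property ``any path from $y$ must traverse $x$'' continues to hold at \emph{every} stage of the iteration, so that the abstract parallel extension of Definition \ref{multmatdef} coincides with the graph operation and does not inadvertently create extra independent sets. Once the single-step equivalence is pinned down, iterating it over all of $\mathcal{E}\setminus\mathcal{E}_1$ is routine.
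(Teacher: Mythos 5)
Your proposal is correct, and its skeleton is the same as the paper's: realize $\mathcal{M}$ as an iterated parallel extension of the strict gammoid $M_{G_1}$ (Lemma \ref{mainlemma} plus Definition \ref{multmatdef}), then invoke closure of gammoids under parallel extension. The difference is in how that closure fact is handled. The paper simply cites Welsh (Ch.\ 14): a series-parallel extension of a gammoid is a gammoid, and stops there; you instead prove the parallel-extension case from scratch by an explicit digraph construction (adjoin a fresh vertex $y$ with the single arc $y \to x$, keep the sink set $B$, restrict $L(G',B)$ to $V_0 \cup \{y\}$), verifying via your items (i)--(iii) that the resulting independent sets are exactly those of $pN(x,y)$. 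Your verification is sound: since $y$ has no incoming arcs it can never be an interior vertex of a linking path, which gives (ii) by deleting/prepending the arc and gives (iii) because any path from $y$ and any path from $x$ both occupy $x$; the worry you raise about later stages of the iteration dissolves because each step re-applies the single-step lemma to the \emph{current} presentation, in which the newly added vertex is fresh and arc-free on the incoming side, regardless of what earlier-added vertices look like by then. What your route buys is self-containedness and something more: the digraph realization is constructive, so it exhibits $\mathcal{M}$ concretely as a gammoid on an explicit graph, which is in the spirit of the algorithmic Section \ref{algorithm}. What the paper's route buys is brevity and slightly greater generality (the cited result covers series extensions too, which are not needed here). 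One small point of hygiene if you write this up: besides (i)--(iii), state explicitly that sets avoiding $y$ are independent in the new matroid iff they were independent in $N$ (immediate, again because $y$ cannot lie on any path not starting at it), so that the identification of independent sets is complete.
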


\begin{proof}
	A series-parallel extension of a gammoid is a gammoid (see \cite{welsh}, Ch. $14$). By Definition \ref{multmatdef}, the multicast matroid is a parallel extension of the strict gammoid $M_{G_1}$. Hence,  multicast matroid is a gammoid.
\end{proof}

\begin{example}
	\label{ex1}
		For the butterfly network in Fig. \ref{figex1}, we have two strict gammoids $M_{G_1}$ and $M_{G_2}$. 
			
		\begin{figure}[htbp]
			\centering
			\includegraphics[scale=0.65]{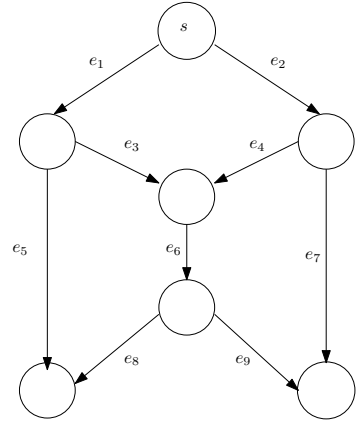}	
			\caption{A butterfly network}
			\label{figex1}
		\end{figure}
	For $\mathcal{M}_{G_1}(\mathcal{E}_1, \mathcal{B}_1)$:
\begin{IEEEeqnarray}{rCl}
	\mathcal{E}_1&=&\{e_1, e_2, e_4, e_5, e_6, e_8\}\nonumber\\
	\mathcal{B}_1 &= &\big\{\{{e_1}, {e_2}\}, \{{e_1}, e_4\}, \{e_1, e_6\}, \{e_1, e_8\}, \nonumber  \\
	 & & \{e_5, e_2\}, \{e_5, e_4\}, \{e_5, e_6\}, \{e_5, e_8\}\big\}.
\end{IEEEeqnarray}
%

	For $\mathcal{M}_{G_2}(\mathcal{E}_2, \mathcal{B}_2)$:
\begin{IEEEeqnarray}{rCl}
	\mathcal{E}_2&=&\{e_1, e_2, e_3, e_6, e_7, e_9\}\nonumber\\
	\mathcal{B}_2 &= &\big\{\{e_1, e_2\}, \{e_3, e_2\}, \{e_6, e_2\}, \{e_9, e_2\}, \nonumber  \\
	& &  \{e_1, e_7\}, \{e_3, e_7\},\{e_6, e_7\}, \{e_9, e_7\}\big\}.
\end{IEEEeqnarray}

For the multicast matroid $\mathcal{M}(f_{\mathcal{E}}, \mathcal{B})$,
\begin{IEEEeqnarray}{rCl}
		\mathcal{B}&=&\big\{\{f_{e_1}, f_{e_2}\}, \{f_{e_1}, f_{e_4}\}, \{f_{e_1}, f_{e_6}\}, \{f_{e_1}, f_{e_8}\}, \{f_{e_5}, f_{e_2}\},\nonumber  \\
		& &  \{f_{e_5}, f_{e_4}\}, \{f_{e_5}, f_{e_6}\}, \{f_{e_5}, f_{e_8}\},   \{f_{e_3}, f_{e_2}\}, \{f_{e_6}, f_{e_2}\}, \nonumber\\
		 &&  \{f_{e_9}, f_{e_2}\},   \{f_{e_1}, f_{e_7}\}, \{f_{e_3}, f_{e_7}\}, \{f_{e_6, e_7}\}, \{f_{e_9}, f_{e_7}\}\big\}.\nonumber  
\end{IEEEeqnarray}
\end{example}
\section{Base Orderability of the Multicast Matroid}
\label{baseordmult}
\begin{definition}[Base orderable matroid\cite{welsh}]
	A matroid $M$ on $S$ is base-orderable if for any two bases $B_1$, $B_2$ of $M$ there exists a bijection $\pi:B_1 \rightarrow B_2$
	such that both $(B_1\setminus x ) \cup \pi(x)$ and $(B_2\setminus \pi(x))\cup x$ are  bases of $M$ for each $x \in B_1$. Such a $\pi$ is called an exchange ordering for $B_1, B_2$.
\end{definition}

%
%
%


\begin{lemma}
	\label{gammbase}
	Gammoids are base orderable. (See \cite{welsh}, Ch.$14$, Th.$1$).
\end{lemma}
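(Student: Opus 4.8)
The plan is to prove the \emph{stronger} property of strong base orderability and then specialise, since this stronger property is the invariant that behaves well under the two operations relating gammoids to transversal matroids. Call a matroid strongly base orderable if for any two bases $B_1,B_2$ there is a bijection $\pi:B_1\to B_2$ such that both $(B_1\setminus A)\cup\pi(A)$ and $(B_2\setminus\pi(A))\cup A$ are bases for \emph{every} subset $A\subseteq B_1$, not merely singletons. Letting $A$ range over singletons recovers exactly the base-orderability of the definition, so it suffices to show that every gammoid is strongly base orderable. The argument then proceeds along the chain \emph{transversal matroid} $\to$ \emph{dual} $\to$ \emph{strict gammoid} $\to$ \emph{restriction} $\to$ \emph{gammoid}, which is exactly the route made available by Lemma \ref{dual} (a matroid is a strict gammoid iff its dual is transversal) together with Definition \ref{gammoids_def} (a gammoid is a restriction of a strict gammoid).

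First I would settle the base case: every transversal matroid is strongly base orderable. Two bases $B_1,B_2$ of a transversal matroid correspond, via the bipartite graph of Section \ref{transversalmat}, to two matchings that saturate $B_1$ and $B_2$ respectively. The symmetric difference of these matchings decomposes into alternating paths and even alternating cycles; re-routing the matched edges along the components touching a chosen subset $A$ produces a new saturating matching, and reading off which elements of $S$ it saturates yields the bijection $\pi$ and verifies the subset-exchange condition simultaneously for all $A$.

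Next I would establish the two closure properties. For duality, suppose $M$ is strongly base orderable and let $B_1,B_2$ be bases of $M$ with a strong exchange bijection $\pi$, which may be taken to fix $B_1\cap B_2$ by the usual normalisation. The complements $S\setminus B_1$ and $S\setminus B_2$ are bases of $M^*$, and the map that fixes $S\setminus(B_1\cup B_2)$ and sends $B_2\setminus B_1$ to $B_1\setminus B_2$ by $\pi^{-1}$ is the required strong exchange bijection for $M^*$: taking complements turns each exchange for $M^*$ into precisely an exchange for $M$, so it inherits the property. Combined with Lemma \ref{dual}, this shows strict gammoids are strongly base orderable. Closure under restriction is easier: deleting a non-coloop leaves the rank unchanged, so the bases of $M\setminus e$ are exactly the bases of $M$ avoiding $e$, and the bijection $\pi$ for $M$ restricts directly since $\pi(A)\subseteq B_2$ never reintroduces $e$; a coloop is handled by fixing it under $\pi$. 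Iterating gives the property for any restriction, and hence for any gammoid as a restriction of the dual of a transversal matroid.

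Putting these together, a gammoid is strongly base orderable, and in particular base orderable, which is the statement of the Lemma. I expect the main obstacle to be the transversal base case: converting the combinatorics of the alternating components of the symmetric difference of two saturating matchings into a \emph{single} bijection $\pi$ that works simultaneously for every subset $A$ is the delicate step, whereas the two closure arguments are essentially bookkeeping once strong base orderability is adopted as the working invariant.
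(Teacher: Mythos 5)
The paper does not actually prove this lemma: it is imported as a black box from Welsh (Ch.~14, Thm.~1), so the only meaningful comparison is against the classical argument behind that citation --- and your proposal is essentially that argument. The route transversal matroid $\to$ dual $\to$ strict gammoid $\to$ restriction $\to$ gammoid, with \emph{strong} base orderability as the invariant carried along it, is the standard proof, it is sound, and it even delivers more than the lemma asks for; what it buys the paper is self-containedness, whereas the citation buys brevity for a result that is genuinely classical. Two of your steps, however, are stated more loosely than they can be left. First, in the transversal base case the bijection $\pi$ must be defined once, independently of $A$: fix $B_1\cap B_2$ pointwise and pair each $x\in B_1\setminus B_2$ with the other $S$-endpoint of its component of the symmetric difference of the two saturating matchings; for this to be well defined you must rule out such a component terminating in $J$ rather than in $B_2\setminus B_1$, and that is exactly where maximality of $B_2$ enters (an alternating path from $x$ ending in $J$ would let you augment the second matching and saturate $B_2\cup x$, contradicting that $B_2$ is a base). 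Once $\pi$ is pinned down this way, vertex-disjointness of distinct components is what legitimises the simultaneous swap for an arbitrary subset $A$. Second, the ``usual normalisation'' making $\pi$ fix $B_1\cap B_2$ --- which your duality step genuinely needs, since otherwise $\pi^{-1}$ need not carry $B_2\setminus B_1$ onto $B_1\setminus B_2$ and the induced map is not even well defined --- is itself a small lemma whose proof uses the strong exchange property (swap the images of an offending pair and check that all subset exchanges are preserved); it is true and standard, but it does real work and should be proved or cited rather than waved at. With those two details supplied, your argument is a correct, self-contained substitute for the paper's citation.
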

\begin{theorem}
	\label{dubase}
	The multicast matroid $\mathcal{M}$ is base-orderable.
\end{theorem}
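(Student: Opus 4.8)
The plan is to obtain this statement as an immediate consequence of the two results already in hand, namely Theorem~\ref{A1} and Lemma~\ref{gammbase}. Theorem~\ref{A1} asserts that the multicast matroid $\mathcal{M}$ is a gammoid, and Lemma~\ref{gammbase} records the classical fact (Welsh, Ch.~14, Th.~1) that every gammoid is base-orderable. Chaining these two facts yields base-orderability of $\mathcal{M}$ directly, so the proof reduces to a single invocation of the earlier results rather than any fresh combinatorial argument.

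To make the logical chain explicit, I would first recall that $\mathcal{M}$ was constructed in Definition~\ref{multmatdef} as a successive parallel extension of the strict gammoid $M_{G_1}$ of Lemma~\ref{mainlemma}, and that Theorem~\ref{A1} used the closure of the gammoid class under series--parallel extension to conclude that $\mathcal{M}$ is itself a gammoid. Since base-orderability is a property of \emph{all} gammoids, no feature of the particular extension sequence is needed; it suffices that $\mathcal{M}$ be a gammoid at all. Consequently any two bases $B_1,B_2$ of $\mathcal{M}$ admit an exchange ordering $\pi:B_1\to B_2$ of the kind demanded by the definition of base-orderability, with both $(B_1\setminus x)\cup\pi(x)$ and $(B_2\setminus\pi(x))\cup x$ bases of $\mathcal{M}$ for each $x\in B_1$.

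The point worth flagging is where the actual work sits: all of it has already been discharged in Theorem~\ref{A1}. A direct attack — constructing the bijection $\pi$ by hand between two arbitrary $\omega$-element cuts along the edge-disjoint paths — would be considerably more delicate, since one would have to verify that every single-element swap simultaneously preserves the cut (and hence basis) property on both sides, tracking how the edge-disjoint path structure reacts to each individual exchange. Routing the argument through the gammoid classification sidesteps this entirely, so there is no substantial remaining obstacle; the only care required is to confirm that the hypothesis of Lemma~\ref{gammbase} applies verbatim to $\mathcal{M}$, which Theorem~\ref{A1} guarantees.
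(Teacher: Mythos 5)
Your proposal is correct and is essentially identical to the paper's own proof: both simply chain Theorem~\ref{A1} (the multicast matroid is a gammoid) with Lemma~\ref{gammbase} (gammoids are base-orderable). The extra discussion you add about why a direct exchange-ordering construction is unnecessary is accurate but not needed; the two-step invocation is the complete argument.
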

\begin{proof}
	Using Theorem \ref{A1}, the multicast matroid $\mathcal{M}$ is a gammoid. Using lemma \ref{gammbase}, gammoids are base-orderable. Hence proved.
%
\end{proof}	

\section{Multicast Network Solvable Over GF(2) and its Multicast  Matroid}
\label{fundaresults2}

In this section, the main result of the paper is given. Starting with a multicast network $\mathcal{N}(\mathcal{V}, \mathcal{E})$ that is solvable over $GF(2)$ it is proved that $\mathcal{N}$  is solvable over all fields. 



\begin{lemma}
	\label{binbasegraphic}
	If a matroid is binary and base-orderable, it is graphic
	\cite{welsh} .
\end{lemma}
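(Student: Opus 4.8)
The plan is to prove the contrapositive through Tutte's excluded-minor characterisation of graphic matroids. Recall that a matroid is graphic if and only if it has no minor isomorphic to $U_{2,4}$, $F_7$, $F_7^*$, $M^*(K_5)$, or $M^*(K_{3,3})$ (see \cite{Oxley}). Since $M$ is assumed binary, and the binary matroids are precisely those with no $U_{2,4}$-minor, the first excluded minor is automatically absent. Thus it suffices to show that a base-orderable binary matroid can contain none of the remaining four as a minor.

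The two ingredients I would establish first are the closure properties of the base-orderable class: (i) base-orderability is preserved under deletion and contraction, so that every minor of a base-orderable matroid is again base-orderable; and (ii) base-orderability is self-dual, i.e. $M$ is base-orderable if and only if $M^*$ is. Both are standard (see \cite{welsh}); property (ii) follows by transporting an exchange ordering $\pi : B_1 \to B_2$ across complementary base pairs, and property (i) by restricting or extending an exchange ordering to the minor's ground set. With (i) in hand, the lemma reduces to the single finite assertion that none of $F_7$, $F_7^*$, $M^*(K_5)$, $M^*(K_{3,3})$ is base-orderable; for if $M$ were base-orderable and had one of these as a minor, that minor would inherit base-orderability, a contradiction.

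Using self-duality (ii), this four-fold check collapses to three: $F_7$ (which settles $F_7^*$ at once), $M(K_5)$ (the dual of $M^*(K_5)$), and $M(K_{3,3})$. For each I would exhibit an explicit pair of bases $B_1, B_2$ for which no bijection $\pi : B_1 \to B_2$ makes both $(B_1 \setminus x)\cup \pi(x)$ and $(B_2 \setminus \pi(x))\cup x$ bases for every $x \in B_1$. For the Fano matroid this exploits the density of its three-point lines, which force some exchange into a dependent triple; for $M(K_5)$ and $M(K_{3,3})$ one selects two spanning trees whose edge sets admit no simultaneous tree-preserving perfect exchange. Combining the three computations with the self-dual reduction shows that all four excluded minors fail to be base-orderable, and the excluded-minor characterisation then forces $M$ to be graphic.

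The main obstacle will be the case analysis verifying non-base-orderability of the excluded minors. This is finite but delicate, since base-orderability demands a \emph{single} bijection valid for every element simultaneously, so ruling it out requires checking that every candidate bijection breaks on some exchange. A consistency remark worth noting is that this step is in fact equivalent to the lemma itself: $M^*(K_5)$ and $M^*(K_{3,3})$ are binary and non-graphic, so the lemma can hold only if they are non-base-orderable. The genuine content of the argument therefore lies precisely in these finite verifications rather than in the structural reductions.
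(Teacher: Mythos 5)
The paper gives no proof of this lemma at all: it is imported from Welsh \cite{welsh}, where it is the de Sousa--Welsh theorem that a binary matroid is base-orderable if and only if it is a series-parallel network (hence graphic), proved by excluding the single minor $M(K_4)$ and invoking the characterisation of series-parallel networks as the connected matroids with no $U_{2,4}$- or $M(K_4)$-minor. Your route through Tutte's five excluded minors for graphicness is therefore genuinely different, and its structural half is sound: binary does dispose of $U_{2,4}$, and base-orderability really is minor-closed and self-dual. In fact your ingredient (i) is easier than you suggest: any exchange ordering $\pi:B_1\to B_2$ automatically maps $B_1\setminus B_2$ onto $B_2\setminus B_1$ and fixes $B_1\cap B_2$ pointwise, because if $\pi(x)\in B_1$ with $\pi(x)\neq x$ then $(B_1\setminus x)\cup\pi(x)$ has only $r-1$ elements and cannot be a base; injectivity and counting do the rest. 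Contraction-closedness then follows by restricting $\pi$, deletion is immediate, and duality is your complementation argument.

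The weak point is that the crux of your argument --- non-base-orderability of $F_7$, $M(K_5)$ and $M(K_{3,3})$ --- is only promised, not carried out, and ruling out \emph{every} bijection for explicit base pairs in $M(K_5)$ and $M(K_{3,3})$ is a sizable case analysis. You can collapse all of it to one short check using the minor-closedness you have already established: $F_7\setminus e\cong M(K_4)$ for any element $e$, $M(K_4)$ is a restriction of $M(K_5)$ (take any $K_4$-subgraph), and $M(K_{3,3})$ has an $M(K_4)$-minor (contract two disjoint edges); with self-duality this covers $F_7^*$, $M^*(K_5)$ and $M^*(K_{3,3})$ as well. So it suffices that $M(K_4)$ itself is not base-orderable, which is a five-line verification: for the spanning trees $B_1=\{12,23,34\}$ and $B_2=\{13,14,24\}$ of $K_4$, the exchange conditions force $\pi(12)=14$ (since $\{23,34,24\}$ is a triangle and $\{14,24,12\}$ is a triangle) and also $\pi(34)=14$ (since $\{12,23,13\}$ and $\{13,14,34\}$ are triangles), contradicting injectivity. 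With this substitution your proof closes completely, and it then essentially reproduces the classical argument behind the cited result.
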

Thus, given a multicast network that is solvable over $GF(2)$, we have the following theorem regarding its multicast matroid.
\begin{theorem}
	\label{graphic}
	The multicast  matroid $\mathcal{M}$  of a network that is scalar linearly solvable over $GF(2)$  is graphic. 
\end{theorem}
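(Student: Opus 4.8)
The plan is to deduce the result from Lemma~\ref{binbasegraphic}, which guarantees that a matroid which is both binary and base-orderable is graphic. Base-orderability of $\mathcal{M}$ is already in hand from Theorem~\ref{dubase}, so the entire content of the present theorem reduces to showing that $\mathcal{M}$ is binary, i.e., representable over $GF(2)$.

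The first step is to observe that the seed gammoid $M_{G_1}$ is binary. Here the hypothesis of $GF(2)$-solvability enters: applying Theorem~\ref{flinear} with $\mathbb{F}=GF(2)$, the vector matroid $M_{IN}$ induced by a $GF(2)$ network code, when restricted to $\mathcal{E}_1$, is a $GF(2)$-representation of $M_{G_1}$. Possessing such a representation is precisely the definition of $M_{G_1}$ being binary.

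The second step is to propagate binariness through the construction. By Definition~\ref{multmatdef} (and as already used in Theorem~\ref{A1}), $\mathcal{M}$ is built from $M_{G_1}$ by a succession of parallel extensions. The lemma I would invoke is that parallel extension preserves representability over a fixed field: if $A$ is a $GF(2)$-matrix representing a matroid with column $a_x$ for $x$, then appending a duplicate column $a_y=a_x$ yields a $GF(2)$-matrix representing $pM(x,y)$, since every independent set using the new element $y$ corresponds to an independent set of $A$ with $y$ replaced by $x$. Applying this at each stage keeps all entries in $GF(2)$, so $\mathcal{M}$ is binary, and Lemma~\ref{binbasegraphic} then delivers the conclusion.

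The main point requiring care is conceptual rather than computational: the $GF(2)$-representation of $\mathcal{M}$ is obtained by duplicating columns according to the abstract parallel-extension recipe, not by literally juxtaposing the actual global encoding kernels $f_{e_{k_i}}$ of the successor edges. One must resist conflating the matroid $\mathcal{M}$, defined purely combinatorially via Definition~\ref{multmatdef}, with the full vector matroid $M_{IN}$ of the network kernels, which in general has strictly more bases; it is the parallel-extension structure, seeded by the binary $M_{G_1}$, that certifies binariness of $\mathcal{M}$.
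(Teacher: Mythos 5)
Your proposal is correct and follows the same overall skeleton as the paper's proof: establish that $\mathcal{M}$ is binary, invoke Theorem~\ref{dubase} for base-orderability, and conclude via Lemma~\ref{binbasegraphic}. Where you genuinely differ is in the binariness step, which is the only place the $GF(2)$ hypothesis enters. The paper disposes of it in one sentence, asserting that the global encoding kernels of the $GF(2)$ multicast themselves ``give a representation for the multicast matroid $\mathcal{M}$'' --- in effect identifying $\mathcal{M}$ with the induced vector matroid $M_{IN}$ of the code. You instead seed binariness with Theorem~\ref{flinear} (the restriction of $M_{IN}$ to $\mathcal{E}_1$ represents the gammoid $M_{G_1}$) and then propagate it through the parallel extensions of Definition~\ref{multmatdef} by column duplication; that duplication argument is a standard and correct fact about representability, since the bases of the vector matroid with a repeated column $a_y=a_x$ are exactly the bases prescribed for $pM(x,y)$. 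Your route is arguably more faithful to the paper's own definition of $\mathcal{M}$: the paper itself observes that the independence structure of an induced matroid need not coincide with structures defined combinatorially on the edge set, so its one-line identification of the kernel matrix with a representation of $\mathcal{M}$ is a leap, and your explicit warning against conflating $\mathcal{M}$ with $M_{IN}$ pinpoints exactly where that leap occurs. What the paper's shortcut buys is concreteness: the algorithm of Section~\ref{algorithm} needs the actual kernel matrix $B$ to represent $\mathcal{M}$, which your abstract duplicated-column representation does not by itself supply. For the theorem as stated, however, your proof fully suffices, and it is the more rigorous of the two.
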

\begin{proof}
	Given a multicast network $\mathcal{N}$ that is solvable over $GF(2)$. Thus, global encoding kernels in the linear multicast over $GF(2)$ gives a representation for the multicast matroid $\mathcal{M}$. This implies that $\mathcal{M}$ is is now a binary matroid. Using theorem \ref{dubase}, $\mathcal{M}$ is base-orderable. Hence, using lemma \ref{binbasegraphic}, $\mathcal{M}$ is graphic when $\mathcal{N}$ is scalar linearly solvable over $GF(2)$.
\end{proof}	

\begin{lemma}
	\label{regular}
	The graphic matroid  of a graph $G$ is representable over all fields and  hence is regular  \cite{Oxley}.
\end{lemma}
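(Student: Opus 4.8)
The plan is to exhibit a single representing matrix that works simultaneously over every field, and then to observe that this same matrix is totally unimodular, so that regularity follows immediately from the criterion recalled in Section \ref{totumdisc}. Concretely, I would use the construction already described in Section \ref{graphicmatroid}: orient each edge of $G$ arbitrarily to obtain the directed graph $D(G)$, and form the vertex-edge incidence matrix $A_{D(G)} = [a_{ij}]$, whose column for a non-loop edge has a single $+1$ at the tail, a single $-1$ at the head, and zeros elsewhere, and whose loop columns are zero. The entire statement then reduces to verifying that, for \emph{any} field $\mathbb{F}$, a set of columns of $A_{D(G)}$ is $\mathbb{F}$-linearly independent precisely when the corresponding edge set is acyclic, i.e. when it is independent in $M(G)$.

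To establish that equivalence I would argue in both directions, keeping track of the coefficients. For the forward implication, any cycle of $G$ furnishes a nontrivial linear relation among the associated columns: traversing the cycle and assigning coefficient $+1$ or $-1$ to each edge according to whether its orientation agrees or disagrees with the direction of traversal makes the corresponding combination of columns vanish. The crucial point is that these coefficients lie in $\{+1,-1\}\subseteq\mathbb{F}$, so the relation holds over every field regardless of characteristic. For the converse, I would peel leaves: a nonempty forest on its edge-induced subgraph has a degree-one vertex, whose row of $A_{D(G)}$ contains exactly one nonzero entry $\pm 1$; an induction on the number of edges, removing a leaf and its unique incident edge at each step, shows the columns of a forest are independent. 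Since the cycles of $G$ are exactly the circuits of the graphic matroid $M(G)$, these two directions together prove that $A_{D(G)}$ represents $M(G)$ over every $\mathbb{F}$, giving representability over all fields.

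Regularity then follows from total unimodularity. Because each column of $A_{D(G)}$ contains at most one $+1$ and one $-1$, every square submatrix has determinant in $\{0,+1,-1\}$: by induction on size, a submatrix either has a zero column (determinant $0$), a column with a single nonzero entry (Laplace-expand along it, multiplying the determinant of a smaller such submatrix by $\pm 1$), or a column of exactly two nonzeros in every column, in which case all rows sum to the zero row and the determinant is $0$. Hence $A_{D(G)}$ is totally unimodular, and by the definition in Section \ref{totumdisc}, $M(G)$ is regular. I expect the only genuine obstacle to be the \emph{field-independence} of the column-dependence characterization in the second step — in particular, one must check the cycle-relation survives in characteristic two, where $+1=-1$; but since it is the incidence pattern rather than the numerical values that encodes dependence, the same matrix serves all fields at once, which is exactly the force of the lemma.
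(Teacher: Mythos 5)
Your proof is correct and takes essentially the same route as the paper: the paper states this lemma purely as a citation to Oxley, and the argument it relies on --- the directed incidence matrix $A_{D(G)}$ representing $M(G)$ over an arbitrary field together with its total unimodularity (each column having at most one $+1$ and one $-1$) --- is exactly the one sketched in Sections \ref{graphicmatroid} and \ref{totumdisc}. You have merely filled in the proofs of the two cited facts (the cycle/forest characterization of column dependence, valid over every characteristic, and the determinant induction for total unimodularity), so this is a faithful expansion of the paper's approach rather than a different one.
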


\begin{theorem}
	A multicast network sovable over $GF(2)$ is solvable over all higher fields.
\end{theorem}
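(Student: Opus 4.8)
The plan is to cash in the structural chain already assembled: the multicast matroid $\mathcal{M}$ of a $GF(2)$-solvable network is graphic (Theorem \ref{graphic}), hence regular and representable over every finite field (Lemma \ref{regular}). What remains is to convert an $\mathbb{F}$-representation of $\mathcal{M}$ into a genuine $\mathbb{F}$-linear multicast on $\mathcal{N}$. First I would fix an arbitrary finite field $\mathbb{F}$ and, invoking regularity, obtain an $\mathbb{F}$-representation assigning to each edge $e$ a vector $f_e \in \mathbb{F}^{\omega}$ (the rank of $\mathcal{M}$ is $\omega$ by Definition \ref{multmatdef}). Since the $\omega$ imaginary source edges are independent in $\mathcal{M}$, an invertible change of coordinates lets me take their vectors to be the standard basis, so Condition $2$ of the linear-network-code definition is met at no cost.

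The crux is to certify that this assignment satisfies the local recursion $f_e = \sum_{d \in In(t)} k_{d,e} f_d$ for every $e \in Out(t)$, i.e. that it is a bona fide network code and not merely an abstract matroid representation. The key observation is that the required dependencies are matroid invariants. In the given $GF(2)$ solution the global encoding kernels already satisfy $f_e = \sum_{d \in In(t)} k_{d,e} f_d$, so the $GF(2)$-representation of $\mathcal{M}$ places $e$ in the closure of $In(t)$; equivalently $\mathrm{rank}_{\mathcal{M}}(In(t) \cup \{e\}) = \mathrm{rank}_{\mathcal{M}}(In(t))$, which is a statement about $\mathcal{M}$ alone. Because closure is preserved by every representation of a matroid, the $\mathbb{F}$-vectors must likewise satisfy $f_e \in \mathrm{span}\{f_d : d \in In(t)\}$, yielding local encoding coefficients $k_{d,e} \in \mathbb{F}$. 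Hence the $\mathbb{F}$-representation is automatically a valid $\mathbb{F}$-linear network code, with the dependencies inherited verbatim from the binary solution.

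It then remains to verify the multicast property. For each receiver $T_i$, the $\omega$ terminal edges of its edge-disjoint paths, lying in $In(T_i)$, form a basis of the gammoid $M_{G_i}$ and therefore a basis of $\mathcal{M}$ (Theorem \ref{flinear} together with the construction in Definition \ref{multmatdef}). Under any $\mathbb{F}$-representation these $\omega$ vectors are linearly independent and hence span $\mathbb{F}^{\omega}$, giving $\dim(V_{T_i}) = \omega$ for every receiver. By Definition \ref{multicast} the code is an $\mathbb{F}$-linear multicast, so $\mathcal{N}$ is solvable over $\mathbb{F}$; since $\mathbb{F}$ was arbitrary, solvability holds over all finite fields.

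I expect the middle step to be the genuine obstacle: arguing that an \emph{arbitrary} representation respects the network's topology. The naive route is to hand-build a network code from the incidence-matrix representation of the graphic matroid, but that invites delicate bookkeeping over which edge succeeds which in each path. The cleaner route, which I would pursue, is to exploit representation-invariance of closure and use the pre-existing $GF(2)$ code to pin down exactly which dependencies $\mathcal{M}$ encodes, so that every field's representation is forced to realize them.
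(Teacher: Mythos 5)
Your proof is correct, and its skeleton --- Theorem \ref{graphic} makes the multicast matroid $\mathcal{M}$ graphic, Lemma \ref{regular} makes it regular, hence representable over every finite field $\mathbb{F}$ --- is exactly the paper's. Where you genuinely diverge is in the step that turns an $\mathbb{F}$-representation of $\mathcal{M}$ into an $\mathbb{F}$-linear multicast on $\mathcal{N}$. The paper outsources this step to the Kim--M\'edard theorem \cite{Kim}: it asserts that $\mathcal{N}$ is matroidal with respect to the regular matroid $\mathcal{M}$, concludes scalar-linear solvability over every field, and then simply states that the columns of an $\mathbb{F}$-representation serve as the global encoding kernels. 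You instead prove that implication from scratch: you normalize the vectors of the $\omega$ imaginary edges to the standard basis, you transfer each local coding constraint $f_e \in \mathrm{span}\{f_d : d \in In(t)\}$ from the $GF(2)$ solution to the arbitrary $\mathbb{F}$-representation via the representation-invariance of matroid rank and closure (legitimate because, by Theorem \ref{graphic}, the $GF(2)$ kernels are themselves a representation of $\mathcal{M}$), and you recover decodability at each receiver because the terminal edges of its $\omega$ edge-disjoint paths form a basis of $\mathcal{M}$ and therefore receive independent vectors under any representation. The citation buys the paper brevity; your closure argument buys self-containment, and it makes explicit precisely the point the paper leaves implicit --- why an arbitrary representation of $\mathcal{M}$ is automatically consistent with the network topology, i.e., why $\mathcal{N}$ qualifies as matroidal with respect to $\mathcal{M}$ with a valid network--matroid mapping, which is the hypothesis needed for \cite{Kim} to apply at all. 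Both arguments rest equally on the paper's earlier claims (Theorem \ref{graphic} and the structure of Definition \ref{multmatdef}), so neither is more exposed there.
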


\begin{proof}
	Given a multicast network $\mathcal{N}$ that is solvable over $GF(2)$.  Using theorem \ref{graphic}, its multicast matroid is graphic and hence regular.
	
	Thus, $\mathcal{N}$ is matroidal with respect to a regular matroid $\mathcal{M}$ and hence scalar linearly solvable over all fields (\cite{Kim}). 
	The columns in an  $\mathbb{F}$-representation of $\mathcal{M}$ serve as global encoding kernels $f_{e_i}$s of the corresponding  $\mathbb{F}$-linear solution of $\mathcal{N}$.
\end{proof}
\section{Obtaining the solution over an arbitrary finite field }
\label{algorithm}

%
%
%

This section details an algorithm to find the linear network coding solution over an arbitrary finite field starting with a linear multicast over $GF(2)$.

The discussion on graphic matroids in Section \ref{graphicmatroid} shows how a binary representation matrix with at most two non-zeros per column is obtained for any graphic matroid.
\begin{definition}
	A binary matrix B is \textit{graphic} if it can be transformed using elementary row operations to a matrix that has at most 2 non-zeros per column\cite{Oxley}. 
\end{definition}

Section \ref{totumdisc} discusses the concept of totally-unimodular matrices and their close relation with graphic matroids.

\begin{lemma}
	\label{totum}
	A $(0,\pm 1)-matrix$ is a real matrix with every entry in $\{0, 1, -1\}$. Let $A$ be a $(0, \pm1)$-matrix whose every column has at most one $1$ and one $-1$. Then $A$ is totally unimodular (See \cite{Oxley}, Ch. $5$ ).
\end{lemma}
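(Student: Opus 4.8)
The plan is to show that every square submatrix $B$ of $A$ satisfies $\det B \in \{0, 1, -1\}$, by induction on the order $k$ of $B$. The base case $k = 1$ is immediate: a $1 \times 1$ submatrix is a single entry of $A$, which lies in $\{0, 1, -1\}$ by hypothesis. For the inductive step I would fix an arbitrary $k \times k$ submatrix $B$ and exploit that each column of $A$ contains at most one $+1$ and at most one $-1$, so each column of $B$ has $0$, $1$, or $2$ nonzero entries; this partitions the argument into three exhaustive cases.

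First, if some column of $B$ is identically zero, then $\det B = 0$ directly. Second, if some column of $B$ has exactly one nonzero entry (necessarily $+1$ or $-1$), I would expand the determinant by cofactors along that column, obtaining $\det B = \pm \det B'$, where $B'$ is the $(k-1) \times (k-1)$ submatrix of $A$ left after deleting the row and column of that entry. By the inductive hypothesis $\det B' \in \{0, 1, -1\}$, and hence $\det B \in \{0, 1, -1\}$.

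The remaining case is when every column of $B$ has exactly two nonzero entries. Here the structural hypothesis on $A$ forces those two entries to be one $+1$ and one $-1$, so every column of $B$ sums to zero. Summing all the rows of $B$ then yields the zero vector, the rows are linearly dependent, and $\det B = 0$. This exhausts the cases and completes the induction. I expect the only delicate point to be this last case: recognizing that the two-nonzero columns make each column sum vanish and thereby exhibit a linear dependence among the rows. The first two cases and the base case are routine, so the whole argument is a clean structural induction once the three cases are correctly delineated.
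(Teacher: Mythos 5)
Your proof is correct and is essentially the argument the paper relies on: the paper itself gives no proof, only the citation to Oxley (Lemma 5.1.4), and the standard proof there is exactly your induction on the order of the square submatrix with the three-case split (a zero column, a column with a single nonzero entry handled by cofactor expansion, and all columns containing one $+1$ and one $-1$ so that the rows sum to zero and the determinant vanishes). Your case analysis is exhaustive, since every column of a submatrix inherits the at-most-one-$1$, at-most-one-$-1$ property, so nothing is missing.
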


Given a multicast network $\mathcal{N}(\mathcal{V}, \mathcal{E},s)$ with its solution over $GF(2)$, the linear solution over an arbitrary field $\mathbb{F}$ is obtained as follows:
\begin{enumerate}
	\item Write the $\arrowvert\mathcal{E}\arrowvert$ global encoding kernels $f_e; e \in \mathcal{E}$ in juxtaposition to obtain the binary matrix $B$.
	
	\item  	From Theorem \ref{graphic}, the multicast matroid $\mathcal{M}$ of a network solvable over $GF(2)$ is graphic. Hence, any representation matrix  $B$, for the matroid $\mathcal{M}$ is graphic. Hence, perform row operations to get a  representation matrix $B'$ with at most two $1$s in every column.
	
	\item Sign the binary matrix $B'$ to get a totally unimodular matrix $\mathbf{B}$. This can be done by simply ensuring that each column of $\mathbf{B}$ has at most one $1$ and one $-1$.
	\item  	View the matrix $\mathbf{B}$ over an arbitrary field $\mathbb{F}$. Each column of $\mathbf{B'}$ gives the global encoding kernel of  its corresponding edge, in an $\mathbb{F}$-linear solution of the network $\mathcal{N}$. 
	
\end{enumerate}

\begin{example}
	The solution over $GF(2)$ for the network in Fig.\ref{fig1} is shown in Fig.\ref{fig2}.
	Juxtaposing the global encoding kernels, we get the following representation matrix for $\mathcal{M}$:
	\[B=\scalemath{0.7}{\left[\begin{array}{ccccccccccccccccccccc}
		1 & 0 & 0 & 1 & 1 & 0 & 0  & 0 & 0 & 1 & 1 & 0 & 0& 0 & 1 & 1 & 1 & 0 & 0  & 1 & 1\\
		0 & 1 & 0 & 0 & 0 & 1 & 1  & 0 & 0 & 0 & 1 & 1 & 1& 0 & 0 & 1 & 1 & 1 & 1  & 0 & 0\\
		0 & 0 & 1 & 0 & 0 & 0 & 0  & 1 & 1 & 0 & 0 & 0 & 1& 1 & 1 & 0 & 0 & 1 & 1  & 1 & 1\\
		\end{array}\right]}.\]

	\begin{figure}[htbp]
		\centering
		\includegraphics[scale=0.45]{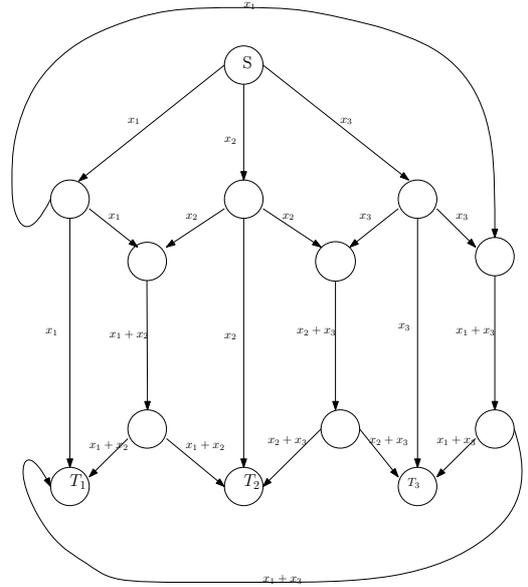}
		
		\caption{A single-source multicast network with three receivers}
		\label{fig2}
	\end{figure}

	A signing of the matrix $B$ to get a real totally unimodular matrix $\mathbf{B}$ is shown below:
	
	$$\scalemath{0.63}{\left[  \begin{array}{ccccccccccccccccccccc}
			1 & 0 & 0 & 1 & 1 & 0 & 0  & 0 & 0 & 1 &  1 & 0 &  0 & 0 &  1 &  1 &  1 & 0  & 0   &  1  & 1\\
			0 & 1 & 0 & 0 & 0 & 1 & 1  & 0 & 0 & 0 & {-1} & 1 &  1 & 0 &  0 & -1 & -1 & 1  &  1  &  0  & 0\\
			0 & 0 & 1 & 0 & 0 & 0 & 0  & 1 & 1 & 0 &  0 & 0 & -1 & 1 & -1 &  0 &  0 & -1 & -1  &  -1 & -1\\
			\end{array}\right].}$$

	Suppose, we need the solution over $\mathbb{F}_5$. When viewed over $\mathbb{F}_5$, the matrix $\mathbf{B}$ becomes:
	$$\mathbf{B}'=\scalemath{0.7} {\left[ \begin{array}{ccccccccccccccccccccc}
		1 & 0 & 0 & 1 & 1 & 0 & 0  & 0 & 0 & 1 &  1 & 0 &  0 & 0 &  1 &  1 &  1 & 0  & 0   &  1  & 1\\
		0 & 1 & 0 & 0 & 0 & 1 & 1  & 0 & 0 & 0 & {4} & 1 &  1 & 0 &  0 & 4 & 4 & 1  &  1  &  0  & 0\\
		0 & 0 & 1 & 0 & 0 & 0 & 0  & 1 & 1 & 0 &  0 & 0 & 4 & 1 & 4 &  0 &  0 & 4 & 4  &  4 & 4\\
		\end{array}\right].}$$
	Each column of $\mathbf{B}'$ represents the global encoding kernel for every edge in $\mathcal{N}$ for a solution over $\mathbb{F}_5$.
\end{example}

\section{Conclusion}
\label{conclusion}

This paper proves the conjecture that  if a multicast network is linearly solvable over GF(2) it is solvable over all higher finite fields. Also, given a multicast network that is solvable over GF(2), an algorithm to obtain the solution over an arbitrary field $\mathbb{F}$ from the solution over $GF(2)$ has been  worked out.


%

\balance
\end{document}